\newcommand\noi{\noindent}
\renewcommand\appendix{\par
  \setcounter{section}{0}%
  \setcounter{subsection}{0}%
  \setcounter{equation}{0}%
  \setcounter{table}{0}%------------ << add
  \setcounter{figure}{0}%----------- << add
  \gdef\theequation{\@Alph\c@section.\arabic{equation}}%
  \gdef\thefigure{\@Alph\c@section.\arabic{figure}}%
  \gdef\thetable{\@Alph\c@section.\arabic{table}}%
  \gdef\thesection{\Alph{section}}%
  \@addtoreset{equation}{section}%
  \@addtoreset{table}{section}%----- << add
  \@addtoreset{figure}{section}%---- << add
}
\begin{document}
\begin{frontmatter}
\title{A parallel space saving algorithm for frequent items  \\ and the Hurwitz zeta distribution}
\author [unile] {Massimo~Cafaro\corref{cor1}}
\ead{massimo.cafaro@unisalento.it}
\cortext[cor1]{Corresponding author}
\author [unile] {Marco Pulimeno}
\ead{marco.pulimeno@unisalento.it}
\author [comp] {Piergiulio Tempesta}
\ead{p.tempesta@fis.ucm.es}
\address[unile]{University of Salento, Lecce, Italy}
\address[comp]{Departamento de F\'{\i}sica Te\'{o}rica II, Facultad de F\'{\i}sicas, Universidad
Complutense, 28040 -- Madrid, Spain and Instituto de Ciencias Matem\'aticas, C/ Nicol\'as Cabrera, No 13--15, 28049 Madrid, Spain}

\begin{abstract} We present a message-passing based parallel version of the Space Saving algorithm designed to solve the $k$--majority problem. The algorithm determines in parallel frequent items, i.e., those whose frequency is greater than a given threshold, and is therefore useful for iceberg queries and many other different contexts. We apply our algorithm to the detection of frequent items in both real and synthetic datasets whose probability distribution functions are a Hurwitz and a Zipf distribution respectively. Also, we compare its parallel performances and accuracy against a parallel algorithm recently proposed for merging summaries derived by the Space Saving or Frequent algorithms.
\end{abstract}

\begin{keyword}
% keywords here, in the form: keyword \sep keyword
Frequent items, Space saving algorithm, Message--passing.
% PACS codes here, in the form: \PACS code \sep code
%\PACS
\end{keyword}

\newtheorem{thm}{Theorem}
\newtheorem{lem}[thm]{Lemma}
\newdefinition{rmk}{Remark}
\newproof{pf}{Proof}
% \newproof{pot}{Proof of Theorem \ref{thm2}}
\newtheorem{prop}[thm]{Proposition}
\newtheorem*{cor}{Corollary}

\newdefinition{defn}{Definition}
\newtheorem{conj}{Conjecture}
\newtheorem{exmp}{Example}
\newtheorem{case}{Case}

%\tableofcontents
\end{frontmatter}

%\tableofcontents

\section{Introduction}
\label{intro}
Discovering frequent items is a  data mining problem that attracted many researchers, owing to its relevance to  applications in  several domains. The problem is also known in the literature, depending on the specific application, as \textit{hot list analysis} \cite{Gibbons}, market basket analysis \cite{Brin} and  \textit{iceberg query} \cite{Fang98computingiceberg}, \cite{Beyer99bottom-upcomputation}. Additional applications include network traffic analysis \cite{DemaineLM02},  \cite{Estan}, \cite{Pan}, the analysis of web logs \cite{Charikar}, Computational and theoretical Linguistics \cite{CICLing}, %verification of the Zipf--Mandelbrot law \cite{Zipf}, \cite{Mandelbrot},
 ecological field studies \cite{Mouillot}, etc.
Several sequential solutions have been provided. In their survey \cite{Cormode}, Cormode and Hadjieleftheriou classify existing algorithms as being either \emph{counter} or \emph{sketch} based. Misra and Gries \cite{Misra82} proposed the first counters--based sequential algorithm, which has been rediscovered independently by Demaine et al. \cite{DemaineLM02} and Karp et al. \cite{Karp}. Recently designed counters--based algorithms include \emph{LossyCounting}
\cite{Manku02approximatefrequency} and \emph{Space Saving} \cite{Metwally2006}. In particular, Space Saving has been shown to be the most efficient and accurate algorithm among counters--based ones \cite{Cormode}, which motivates our choice of designing a parallel version of this algorithm.
Notable sketch--based solutions are \emph{CountSketch} \cite{Charikar} and \emph{CountMin} \cite{Cormode05}. In the parallel setting, we presented in \cite{cafaro-tempesta} a message-passing based parallel version of the \emph{Frequent} algorithm, whilst \cite{Zhang2013} presents a shared-memory parallel version.  A parallel version of the Lossy Counting algorithm has been proposed in \cite{Zhang2012}. Parallel versions of the Space Saving algorithm for shared-memory architectures have been designed in \cite{Roy2012} and \cite{Das2009}. A GPU (Graphics Processing Unit) accelerated algorithm for frequent items appeared in \cite{Govindaraju2005} and \cite{Erra2012}. %Algorithms for specialized hardware such as FPGA (Field-Programmable Gate Array), NPU (Network Processing Unit) and stream processors have been presented in \cite{Teubner2010}, \cite{Teubner2011}, \cite{Bandi2007}, \cite{Das2008} and \cite{Lai2006}. 
Novel shared-memory parallel algorithms for frequent items were recently proposed in \cite{Tangwongsan2014}.

A similar problem, mining frequent itemsets, is strongly related to the problem of association pattern mining, which originated from the analysis of market basket data. Such datasets are basically sets of items bought by customers, traditionally referred to as transactions. Association pattern mining entails discovering the so-called association rules between sets of items. Frequent itemsets (also known in the literature as frequent patterns) are those sets of items determined by the mining process; association rules are simple implications stated as $A \implies B$, in which both $A$ and $B$ are sets of items. Among the many algorithms appeared in the literature, we recall here recent work including \cite{fis1}, \cite{fis2}, \cite{fis3} and \cite{fis4}.

In this paper, we investigate how to parallelize the \emph{Space Saving} algorithm, and we design our algorithm in the context of message--passing architectures. To the best of our knowledge, this is the first parallel version of the Space Saving algorithm for message-passing architectures. Therefore it is the only one that can solve arbitrarily large problems on millions of cores, owing to intrinsic hardware limits related to shared-memory architectures preventing scalability of SMP (Symmetric Multi-Processing) nodes to higher processor counts. Indeed, current SMP nodes are equipped with at most a few dozens of processors. We prove the correctness of the algorithm, and then analyze its parallel complexity proving its cost--optimality for $k = O(1)$.

Another original aspect of this work is that we apply our algorithm to the study of frequent items in datasets whose probability distribution function is a \emph{Hurwitz distribution}. This distribution generalizes the classical Zipf distribution and is based on a well-known generalization of the Riemann zeta function, i.e. the Hurwitz function. We shall show that our parallel algorithm is especially suitable for treating these kind of datasets. We stress that the relevance of the Hurwitz distribution is very general. Indeed, the presence of an extra parameter makes it a more flexible tool than the classical Zipf one. To the best of our knowledge, this work offers the first example of application of the Hurwitz distribution in dataset analysis.

Before stating the problem solved by our algorithm, we need to recall a few basic definitions from multiset theory \cite{Syropoulos01}. We shall use a calligraphic capital letter to denote a multiset, and the corresponding capital Greek letter to denote its \textit{underlying} set. We are given a dataset $\mathcal{N}$ consisting of $n$ elements, and an integer $k$, with $2 \leq k \leq n$.

\begin{defn}
\label{multiset}
A multiset $\mathcal{N}=(N, f_{\mathcal{N}})$ is a pair where $N$ is some set, called the underlying set of elements, and $f_{\mathcal{N}}: N \rightarrow \mathbb{N}$ is a function.
The generalized indicator function of $\mathcal{N}$ is 
\begin{equation}
\label{eq01}
I_\mathcal{N} (x) := \left\{ {\begin{array}{*{20}c}
   {f_{\mathcal{N}}(x)} & {x \in N} , \\
   0 & {x \notin N},  \\
 \end{array} }\right.
 \end{equation}
 
\noindent where the integer--valued function $f_{\mathcal{N}}$, for each $x \in N$, provides its  \textit{frequency} (or multiplicity), i.e., the number of occurrences of $x$ in $\mathcal{N}$. 
\noi The cardinality of $\mathcal{N}$ is expressed by

\begin{equation}
\label{eq02}
\left\vert{\mathcal{N}}\right\vert := Card(\mathcal{N}) = \sum\limits_{x \in N} {I_\mathcal{N} (x)},
\end{equation}

\noi whilst the cardinality of the underlying set $N$ is

\begin{equation}
\label{eq03}
\left\vert{N}\right\vert := Card(N) = \sum\limits_{x \in N} {1}.
\end{equation}
\end{defn}

\noindent A multiset (also called a \emph{bag}) essentially is a set where the duplication of elements is allowed. In the sequel, $\mathcal{N}$ will play the role of a finite
input array, containing $n$ elements.

We can now state the problem formally.

\begin{defn}
\label{def2}
Given a \emph{multiset} $\mathcal{N}$, with $\left\vert{\mathcal{N}}\right\vert=n$, a $k$--majority element (or \emph{frequent item}) is an element $x\in N$ whose \emph{frequency} $f_{\mathcal{N}}(x)$ is such
that  $f_{\mathcal{N}}(x) \geq \left\lfloor {\frac{n}{k}} \right\rfloor+1$.
\end{defn}

\textbf{Statement of the problem}. \textit{The $k$--majority problem takes as input an array $\mathcal{N}$ of $n$ numbers, and requires as output the set $W = \left\{x \in N: f_{\mathcal{N}}(x)
\geq \left\lfloor {\frac{n}{k}} \right\rfloor+1\right\}$}.

Therefore, the $k$--majority problem entails finding the set of elements whose frequency is greater than a given threshold controlled by the parameter $k$. It is worth noting here that when $k = 2$, the problem reduces to the well known majority problem \cite{Moore81}, \cite{Fischer82}.

This article is organized as follows. We recall the sequential Space Saving algorithm in Section \ref{spacesaving}. Our parallel space saving algorithm is presented in Section \ref{pss-algorithm}. We prove its correctness in Section \ref{correctness},
analyze it and prove its cost--optimality for $k = O(1)$ in Section \ref{pss-analysis}. We provide and discuss in Appendix experimental results concerning the application of our algorithm to both real and synthetic datasets governed by a Zipf--Mandelbrot and by a Hurwitz distribution. In particular, we also compare our algorithm with another parallel algorithm designed and implemented by us starting from a sequential algorithm by Agarwal et al \cite{Agarwal}. Finally, we draw our conclusions in Section \ref{conclusions}.

\section{The space saving algorithm}
\label{spacesaving}

We recall here a few basic facts related to the sequential Space Saving algorithm that will be used later. The algorithm uses exactly $k$ counters in order to solve the $k$-majority problem sequentially, and allows estimating the maximum error committed when computing the frequency of an item. 
Space Saving works as described by the pseudocode of Algorithm \ref{ss}. We denote by $\mathcal{S}[i].e$, $\mathcal{S}[i].\hat{f}$ and $\mathcal{S}[i].\hat{\varepsilon}$ respectively the element monitored by the $i$th counter of $\mathcal{S}$, the corresponding estimated frequency and error committed in the estimation. When processing an item which is already monitored by a counter, its estimated frequency is incremented by one. When processing an item which is not already monitored by one of the available counters, there are two possibilities. If a counter is available, it will be in charge of monitoring the item and its estimated frequency is set to one.  Otherwise, if all of the counters are already occupied (their frequencies are different from zero), the counter storing the item with minimum frequency is incremented by one. Then the monitored item is evicted from the counter and replaced by the new item. This happens since an item which is not monitored can not have occurred in the input a number of times greater than the minimal frequency. The algorithm assumes that the item has occurred exactly a number of times equal to the frequency stored by the minimum counter, estimating by excess its frequency and introducing an error which is at most the minimum frequency. We keep track of this error, as done in \cite{Metwally2006}, by storing for each monitored item its error $\hat{\varepsilon}$.

Let $\mathcal{N} = (N, f_{\mathcal{N}})$ be the input multiset, $\mathcal{S} = (\Sigma, \hat{f_{\mathcal{S}}})$ the multiset of all of the monitored items and their respective counters at the end of the sequential Space Saving algorithm's execution, i.e., the algorithm's summary data structure. Let  $\left\vert \mathcal{S} \right\vert$ be the sum of the frequencies stored in the counters, $f_\mathcal{N}(e)$ the exact frequency of an item $e$, $\hat{f_\mathcal{S}}(e)$ its estimated frequency, $\hat{f_\mathcal{S}}^{min}$ the minimum frequency in $\mathcal{S}$ and $\hat{\varepsilon}_\mathcal{S}(e)$ the error of item $e$, i.e. an over-estimation of the difference between the estimated and exact frequency. It is worth noting here that $\hat{f_\mathcal{S}}^{min} = 0$ when  $\left\vert{\Sigma}\right\vert < k$. 
The following relations hold (as proved in \cite{Metwally2006}) for each item $e \in N$:

\begin{equation}
\label{eq04}
\left\vert{\mathcal{S}}\right\vert = \left\vert{\mathcal{N}}\right\vert,
\end{equation}

\begin{equation}
\label{eq05}
\hat{f_\mathcal{S}}(e) - \hat{f_\mathcal{S}}^{min} \leq\hat{f_\mathcal{S}}(e) - \hat{\varepsilon}_\mathcal{S}(e) \leq f_\mathcal{N}(e) \leq \hat{f_\mathcal{S}}(e),  \qquad e \in \Sigma,
\end{equation}

\begin{equation}
\label{eq06}
f_\mathcal{N}(e)  \leq \hat{f_\mathcal{S}}^{min}, \qquad \hspace{45mm} e \notin \Sigma,
\end{equation}

\begin{equation}
\label{eq07}
\hat{f_\mathcal{S}}^{min}  \leq \left\lfloor\frac{\left\vert{\mathcal{N}}\right\vert}{k}\right\rfloor.
\end{equation}

If an item $e$, at the end of the algorithm's execution, has an estimated frequency $\hat{f_\mathcal{S}}(e)$ less than the required threshold, $e$ can be excluded from the output, since it can not be a frequent item. Instead, if we keep track of the error $\hat{\varepsilon}_\mathcal{S}(e)$ and $\hat{f_\mathcal{S}}(e) - \hat{\varepsilon}_\mathcal{S}(e)$ is greater than or equal to the threshold, then $e$ is a frequent item. All of the other output items are only \emph{potential} frequent items.

\begin{algorithm}
\begin{algorithmic}[1]
\Require $\mathcal{N}$, an array; $start$, first index of $\mathcal{N}$ to be processed; $end$, last index of $\mathcal{N}$ to be processed; $k$, the $k$-majority parameter
\Ensure a summary containing $k$--majority candidate elements
\Procedure {SpaceSaving}{$\mathcal{N}, start, end, k$}
\State $\mathcal{S} \leftarrow \Call{InitializeCounters}{k}$
\For{$i=start$ to $end$}
	\If{$\mathcal{N}[i]$ is monitored}
		\State let $\mathcal{S}[l]$ be the counter of $\mathcal{N}[i]$ 
		\State $\mathcal{S}[l].\hat{f} \leftarrow \mathcal{S}[l].\hat{f} + 1$
	\Else
		\State let $\mathcal{S}[m].e$ be the element with least hits
		\State $\mathcal{S}[m].e \leftarrow \mathcal{N}[i]$ 
		\State $\mathcal{S}[m].\hat{\varepsilon} \leftarrow \mathcal{S}[m].\hat{f}$
		\State $\mathcal{S}[m].\hat{f} \leftarrow \mathcal{S}[m].\hat{f} + 1$
	\EndIf
\EndFor
\State \Return $\mathcal{S}$
\EndProcedure
\caption{Space saving.}
\label{ss}
\end{algorithmic}
\end{algorithm}

\section{A parallel space saving algorithm}

\label{pss-algorithm}

The pseudocode of Algorithm \ref{pss} describes our parallel Space Saving algorithm. We assume that the input array $\mathcal{N}$ is initially read by an application calling our function implementing the algorithm; for instance, every process reads the input from a file or a designated process reads it and broadcast it to the other processes. The initial call is \textit{ParallelSpaceSaving} $(\mathcal{N}, n, p, k)$, where $\mathcal{N}$ consists of $n$ elements, $p$ is the number of processors (or cores) we use in parallel and $k$ is the $k$-majority parameter. Each processor is assigned a unique rank; ranks are numbered from 0 to $p-1$. The algorithm determines in parallel $k$--majority candidates. We recall here that, indeed, some of the candidates returned may be false positives as in the sequential counterpart.

The algorithm works as follows. In the initial domain decomposition, each processor determines the indices of the first and last
element related to its block, by applying a simple block distribution, in which each processor is responsible for either $\left\lfloor {n/p} \right\rfloor$ or $\left\lceil
{n/p} \right\rceil$ elements.

Then, each processor determines $local$, a stream summary data structure storing its local candidates, their corresponding estimated frequencies and errors, by using the well-known  algorithm designed by Metwally et al. \cite{Metwally2006}, shown in the pseudocode as the \textit{SpaceSaving} function. An hash table $hash$ is then built, storing the local candidates as keys and their corresponding counters (estimated frequencies and errors) as values. This hash table is then sorted in ascending order by counters' frequency and used as input for the parallel reduction, whose purpose is to determine global candidates for the whole array.
This step is carried out by means of the $ParallelReduction$ function, shown as Algorithm. \ref{pssr}.

Assuming that the parallel reduction returns the result to the processor whose rank is 0, then that processor prunes the global candidates removing all of the items below the threshold required to be frequent items and returns the results. The $Pruned$ function, which is not shown here to save space, is just a linear scan in which every item's frequency is compared against the threshold and, if the frequency is greater than or equal to the threshold, then the item is returned in $result$ as a \emph{potential} frequent item.

The parallel reduction determines global candidates for the whole array and works as shown in Algorithm \ref{pssr}. In each sub-step of the reduction, a processor receives as input from two processors  $p_{r}$ and $p_{s}$ their  hash tables, that shall be called from now on  $\mathcal{S}_1$ and $\mathcal{S}_2$ respectively. These data structures contain local items as keys and their counters storing estimated frequencies and errors. For a generic summary $\mathcal{S}$, we denote by $\mathcal{S}.nz$ the number of  items in $\mathcal{S}$, and respectively with $\mathcal{S}[i].e$,  $\mathcal{S}[i].\hat{f}$ and $\mathcal{S}[i].\hat{\varepsilon}$ the element monitored by the $i$th counter of $\mathcal{S}$, the corresponding estimated frequency and the error committed in the estimation. 

The reduction works as follows. For both input summaries $\mathcal{S}_i, i = 1,2$ we have $\mathcal{S}_i.nz \leq k$.  We determine $m_1$ as the minimum among the frequencies of $\mathcal{S}_1$ if $\mathcal{S}_1.nz = k$, otherwise $m_1 =  0$. Similarly, we determine $m_2$ for $\mathcal{S}_2$.
Then, we combine the two summaries by calling the \textit{COMBINE} function, shown as pseudocode in Algorithm \ref{combine}. We scan the first hash table, and for each item in $\mathcal{S}_1$ we check if the item also appears in $\mathcal{S}_2$ by calling the \textit{FIND} function. In this case, we insert the entry for the item in $\mathcal{S}_C$, storing as its estimated frequency (respectively as its error) the sum of its frequency and the frequency of the corresponding item in $\mathcal{S}_2$ (respectively the sum of its error and the error of the corresponding item in $\mathcal{S}_2$), and remove the item from $\mathcal{S}_2$. Otherwise, we insert the entry for the item storing as its estimated frequency (respectively as its error) the sum of its frequency and the minimum $m_2$ (respectively the sum of its error and the minimum $m_2$). 

We then scan the second hash table. Since each time an item in $\mathcal{S}_1$ was also present in $\mathcal{S}_2$ we removed that item from $\mathcal{S}_2$, now $\mathcal{S}_2$ contains only items that do not appear in $\mathcal{S}_1$. For each item in $\mathcal{S}_2$ we simply insert the item in $\mathcal{S}_C$ and in the corresponding counter we store as estimated frequency (respectively as its error) the sum of its frequency and the minimum $m_1$ (respectively the sum of its error and the minimum $m_1$). Finally, the entries in $\mathcal{S}_C$ are sorted by the counters' frequency and this hash table is returned.

Note that for the $\mathcal{S}_C$ summary returned by the \textit{COMBINE} function it holds that $\mathcal{S}_C.nz \leq 2k$. Indeed, $\mathcal{S}_C$  may contain up to $2k$ items in the worst case (i.e., when all of the items in both $\mathcal{S}_1$ and $\mathcal{S}_2$ are different). 

However, we need to return at most $k$ items. Therefore, if $\mathcal{S}_C.nz \leq k$ (the number of entries with nonzero counter's frequency is at most $k$), we return $\mathcal{S}_C$ as $\mathcal{S}_M$. Otherwise, we remove the first $\mathcal{S}_C.nz - k$ items and then return $\mathcal{S}_C$ as $\mathcal{S}_M$, which contains exactly the $k$ items with the largest frequencies.

As an implementation detail, in the \textit{COMBINE} function it is possible to avoid using the $\mathcal{S}_C$ hash table altogether. Indeed, when scanning $\mathcal{S}_1$ one can simply update the frequency of the current item being processed, and when scanning $\mathcal{S}_2$ each item will be inserted into $\mathcal{S}_1$. At the end, $\mathcal{S}_1$ is returned. However, we prefer to use $\mathcal{S}_C$ in the interest of clarity, noting that the overall space complexity of \textit{COMBINE} is $O(k)$ in either case.

\begin{algorithm}
\begin{algorithmic}[1]
\Require $\mathcal{N}$, an array; $n$, the length of $\mathcal{N}$; $p$, the number of processors; $k$, the $k$-majority parameter
\Ensure an hash table containing $k$--majority candidate elements
\Procedure {ParallelSpaceSaving}{$\mathcal{N},  n, p, k$}
\Comment{The $n$ elements of the input array $\mathcal{N}$ are distributed to the $p$ processors so that each one is responsible for either $\left\lfloor {n/p} \right\rfloor$ or
$\left\lceil {n/p} \right\rceil$ elements; let $left$ and $right$ be respectively the indices of the first and last element of the sub-array handled by the process with rank
$id$; ranks are numbered from 0 to $p-1$}
\State $left \leftarrow \left\lfloor {(id-1)~n/p} \right\rfloor$
\State $right \leftarrow \left\lfloor {id~n/p} \right\rfloor  - 1$

\State $local \leftarrow  \Call{SpaceSaving}{\mathcal{N}, left, right, k}$
\Comment{determine local candidates}
\State let $hash$ be an hash table storing $<item, counter>$ pairs in $local$
\State sort $hash$ by counters' frequency in ascending order
\State $global \leftarrow \Call{ParallelReduction}{hash, k}$
\Comment{determine the global candidates for the whole array}

\If{$id == 0$} \Comment{we assume here that the processor with rank 0 holds the final result of the parallel reduction}
	\State $result \leftarrow \Call{Pruned}{global, n, k}$ 
	\State \Return $result$ 
\EndIf

\EndProcedure
\caption{Parallel space saving.}
\label{pss}
\end{algorithmic}
\end{algorithm}

\begin{algorithm}
\begin{algorithmic}[1]
\Require $\mathcal{S}_1$, $\mathcal{S}_2$: hash tables ordered by counters' frequency; $k$: the $k$-majority parameter; 
	the hash tables store pairs $<item, counter>$, a monitored item $e$ is used as key and a counter $c$ as object, including the estimated frequency $c.\hat{f}$ and the error $c.\hat{\varepsilon}$ of the item $e$
\Ensure an hash table, which is the \textit{merged summary} $\mathcal{S}_M$
\Procedure {ParallelReduction}{$\mathcal{S}_1,  \mathcal{S}_2, k$}
        \Comment $\mathcal{S}_i.nz$ is the number of items in the hash table $\mathcal{S}_i$
	\If{$\mathcal{S}_1.nz == k$}
		\State let $counter$ be the first counter in $\mathcal{S}_1$
		\State $m_1 \leftarrow counter.\hat{f}$
	\Else
		\State $m_1 \leftarrow 0$
	\EndIf
	\If{$\mathcal{S}_2.nz == k$}
		\State let $counter$ be the first counter in $\mathcal{S}_2$
		\State $m_2 \leftarrow counter.\hat{f}$
	\Else
		\State $m_2 \leftarrow 0$
	\EndIf
\State $\mathcal{S}_C \leftarrow \Call{combine}{\mathcal{S}_1, \mathcal{S}_2, m_1, m_2, k}$

\If{$\mathcal{S}_C.nz \leq k$}
	\State \Return $\mathcal{S}_C$ as $\mathcal{S}_M$;
\Else 
	\State $excess \leftarrow \mathcal{S}_C.nz - k$
	\State remove first $excess$ items from $\mathcal{S}_C$
	\State \Return $\mathcal{S}_C$ as $\mathcal{S}_M$; \Comment return the last $k$ items
\EndIf

\EndProcedure
\caption{Parallel reduction for space saving summaries.}
\label{pssr}
\end{algorithmic}
\end{algorithm}

\begin{algorithm}
\begin{algorithmic}[1]
\Require $\mathcal{S}_1$, $\mathcal{S}_2$: hash tables ordered by counters' frequency; $m_1$, the minimum of counters' frequency in $\mathcal{S}_1$; $m_2$, the minimum of counters' frequency in $\mathcal{S}_2$; $k$, the $k$-majority parameter
\Ensure an hash table, which is the \textit{combined summary} $\mathcal{S}_C$
\Procedure {combine}{$\mathcal{S}_1$,  $\mathcal{S}_2$, $m_1$, $m_2$, $k$}
\State let $\mathcal{S}_C$ be an empty hash table
\ForAll{$entry$ in $\mathcal{S}_1$}
	\State $item \leftarrow entry.key$
	\State $counter \leftarrow entry.val$	
	\State $found \leftarrow \mathcal{S}_2.\Call{Find}{item}$
	\If{$found$}
		\State $newcounter.\hat{f} \leftarrow counter.\hat{f} + found.\hat{f}$
		\State $newcounter.\hat{\varepsilon} \leftarrow counter.\hat{\varepsilon} + found.\hat{\varepsilon}$
		\State $\mathcal{S}_C.Put(item, newcounter)$
		\State $\mathcal{S}_2.Remove(item)$
	\Else
		\State $newcounter.\hat{f} \leftarrow counter.\hat{f} + min_2$
		\State $newcounter.\hat{\varepsilon} \leftarrow counter.\hat{\varepsilon} + min_2$
		\State $\mathcal{S}_C.Put(item, newcounter)$
	\EndIf
\EndFor

\ForAll{$entry$ in $\mathcal{S}_2$}
	\State $item \leftarrow entry.key$
	\State $counter \leftarrow entry.val$	
	\State $newcounter.\hat{f} \leftarrow counter.\hat{f} + min_1$
	\State $newcounter.\hat{\varepsilon} \leftarrow counter.\hat{\varepsilon} + min_1$
	\State $\mathcal{S}_C.Put(item, newcounter)$
\EndFor

\State sort $\mathcal{S}_C$ by counters' frequency in ascending order
\State \Return $\mathcal{S}_C$
\EndProcedure
\caption{Combine.}
\label{combine}
\end{algorithmic}
\end{algorithm}

\section{Correctness}
\label{correctness}

In this Section we formally prove that our Parallel Space Saving Algorithm is correct when executed on $p$ processors. We decompose the original array (i.e. multiset) of data $\mathcal{N}$ in $p$ subarrays $\mathcal{N}_i$ $(i=0,\ldots,p-1)$, namely $\mathcal{N}=\biguplus_i \mathcal{N}_i$. Here
the $\uplus$ operator denotes the \textit{join operation} \cite{Syropoulos01}, which is the sum of the frequency functions as follows:
$I_{\mathcal{A} \uplus \mathcal{B}} (x) = I_\mathcal{A} (x) + I_\mathcal{B} (x)$.
Let the sub--array $\mathcal{N}_i$ be assigned to the processor $p_i$, whose rank is denoted by $id$, with $id=0,\ldots, p-1$. Let also $\left\vert{\mathcal{N}_i}\right\vert$ denote the cardinality of $\mathcal{N}_i$, with $\sum_i \left\vert{\mathcal{N}_i}\right\vert=\left\vert{\mathcal{N}}\right\vert=n $.

The first step of the algorithm consists in the execution of the sequential Space Saving algorithm (which has already been proved to be correct by its authors),  on the subarray
assigned to each processor $p_{i}$. Therefore, in order to prove the overall correctness of the algorithm, we just need to demonstrate that the parallel reduction is correct.
Our strategy is to prove that if a single sub-step of the parallel reduction is correct, then we can naturally extend the proof to the $O(\log~p)$ steps of the whole parallel reduction.

We begin by proving a couple of preliminary results necessary for the proof of correctness of our parallel algorithm; both results are related to the combined summary $\mathcal{S}_C$ obtained by Algorithm \ref{combine}. We present in Table \ref{notation} the notation used throughout this Section, and recall here that we use a calligraphic capital letter to denote a multiset, and the corresponding capital Greek letter to denote its \textit{underlying} set.

\begin{table}
\renewcommand{\arraystretch}{1.3}
 \caption{Notation}
      \label{notation}
	\centering
    \begin{tabularx}{\textwidth}{@{} |c|X| @{}}
    \hline
    Notation & Description  \\ \hline \hline
    $\mathcal{L}$ &  A generic multiset (input or summary) \\ \hline
    $\Lambda$ & Underlying set of  $\mathcal{L}$  \\ \hline
    $\left\vert{\mathcal{L}}\right\vert$ & Cardinality of $\mathcal{L}$ \\ \hline
    $\left\vert{\Lambda}\right\vert$ & Cardinality of the underlying set of $\mathcal{L}$ \\ \hline
    $\hat{f}_{\mathcal{L}}(e)$ & Let $\mathcal{L}$ be a summary related to an input multiset $\mathcal{N}$; given an item $e \in \mathcal{L}$, $\hat{f}_{\mathcal{L}}(e)$ is the estimated frequency of item $e$ in $\mathcal{N}$ \\ \hline
    $f_{\mathcal{L}}(e)$ & Exact frequency of item $e$ in $\mathcal{L}$, an input multiset\\ \hline
    $\hat{\varepsilon}_{\mathcal{L}}(e)$ & Estimated 
    error of item $e$ in $\mathcal{L}$, a summary related to an input multiset \\ \hline
    $\hat{f}_{\mathcal{L}}^{min}$ & Minimum of counters' frequency in $\mathcal{L}$; we let $\hat{f}_{\mathcal{L}}^{min} = 0$ if $\left\vert{\Lambda}\right\vert < k$ \\ \hline
    \end{tabularx}
    \end{table}

%Let $\hat{f}_{\mathcal{S}_C}(e)$, $\hat{f}_{\mathcal{S}_1}(e)$ and $\hat{f}_{\mathcal{S}_2}(e)$ be respectively the estimated frequency of an item $e$ in the combined summary $\mathcal{S}_C$ and in the summaries $\mathcal{S}_1$ and $\mathcal{S}_2$. Moreover, let $\hat{f}_{\mathcal{S}_i}^{min}$ be the minimum of counters' frequency in $\mathcal{S}_i, i = 1,2$ where $\hat{f}_{\mathcal{S}_i}^{min} = 0$ if $\left\vert{\Sigma_i}\right\vert < k$, $i = 1, 2$.
%Analogously, let $\hat{\varepsilon}_{S_C}(e)$, $\hat{\varepsilon}_{\mathcal{S}_1}(e)$ and $\hat{\varepsilon}_{\mathcal{S}_1}(e)$ be respectively the error associated to the item $e$ in the combined summary $\mathcal{S}_C$ and in the summaries $\mathcal{S}_1$ and $\mathcal{S}_2$.
Mathematically, we can express the combine operation as shown by the following two equations:

\begin{equation}
\label{eq08}
\hat{f}_{\mathcal{S}_C}(e) = 
\left\lbrace 
\begin{array}{r} 
\vspace{0.3cm}
\hat{f}_{\mathcal{S}_1}(e) + \hat{f}_{\mathcal{S}_2}(e), \qquad e \in \Sigma_1 \cap \Sigma_2, \\
\vspace{0.3cm}
\hat{f}_{\mathcal{S}_1}(e) + \hat{f}_{\mathcal{S}_2}^{min}, \qquad e \in \Sigma_1 \setminus \Sigma_2, \\
\hat{f}_{\mathcal{S}_2}(e) + \hat{f}_{\mathcal{S}_1}^{min}, \qquad e \in \Sigma_2 \setminus \Sigma_1,
\end{array}\right.
\end{equation}

\begin{equation}
\label{eq09}
\hat{\varepsilon}_{S_C}(e) = 
\left\lbrace 
\begin{array}{r} 
\vspace{0.3cm}
\hat{\varepsilon}_{\mathcal{S}_1}(e) + \hat{\varepsilon}_{\mathcal{S}_2}(e), \qquad e \in \Sigma_1 \cap \Sigma_2, \\
\vspace{0.3cm}
\hat{\varepsilon}_{\mathcal{S}_1}(e) + \hat{f}_{\mathcal{S}_2}^{min}, \qquad e \in \Sigma_1 \setminus \Sigma_2, \\
\hat{\varepsilon}_{\mathcal{S}_2}(e) + \hat{f}_{\mathcal{S}_1}^{min}, \qquad e \in \Sigma_2 \setminus \Sigma_1.
\end{array}\right.
\end{equation}

As shown in eq. (\ref{eq08}), if an item belongs to both summaries, we update its estimated frequency by summing up the estimated frequencies of the counters corresponding to the item in the two summaries. If an item belongs to only one summary, we update its estimated frequency by adding the minimum frequency stored in the other summary. At the same time we can estimate the error for each item, as shown by eq. (\ref{eq09}). This combining step leads to a summary $\mathcal{S}_C$ storing at most $2k$ distinct items. This summary includes all of the frequent items belonging to the set of items which is the union of the underlying sets related to the two input summaries.

\begin{lem}
\label{lemma1}
Let $\mathcal{S}_1 = (\Sigma_1, \hat{f}_{\mathcal{S}_1})$ and $\mathcal{S}_2 = (\Sigma_2, \hat{f}_{\mathcal{S}_2})$ be two summaries related respectively to the input sub-arrays $\mathcal{N}_1 = (N_1, f_{\mathcal{N}_1})$ and $\mathcal{N}_2 = (N_2, f_{\mathcal{N}_2})$, with $\mathcal{N} = \mathcal{N}_1 \uplus \mathcal{N}_2 = (N, f_{\mathcal{N}})$. Let $\mathcal{S}_C = (\Sigma_C, \hat{f}_{\mathcal{S}_C})$ be the intermediate summary obtained combining $\mathcal{S}_1$ and $\mathcal{S}_2$, and let $\delta = \hat{f}_{\mathcal{S}_1}^{min} + \hat{f}_{\mathcal{S}_2}^{min}$.

%For the summaries $\mathcal{S}_1$, $\mathcal{S}_2$ and $\mathcal{S}_C$, and for the quantity $\delta$ defined in Lemma \ref{prop1}, the following relation holds: 

\noi The following relation holds:

\begin{equation}
\label{eq10}
\left\vert{\mathcal{S}_C}\right\vert = \left\vert{\mathcal{S}_1}\right\vert + \left\vert{\mathcal{S}_2}\right\vert + x \delta,
\end{equation}
\end{lem}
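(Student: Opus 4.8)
The plan is to compute $\left\vert{\mathcal{S}_C}\right\vert$ directly from its definition and the combine rule (\ref{eq08}), and then to recognise the surplus over $\left\vert{\mathcal{S}_1}\right\vert + \left\vert{\mathcal{S}_2}\right\vert$ as a multiple of $\delta$. First I would observe that the \textit{COMBINE} procedure inserts into $\mathcal{S}_C$ exactly one entry for each element of $\Sigma_1 \cup \Sigma_2$, so that $\Sigma_C = \Sigma_1 \cup \Sigma_2$ and the three cases of (\ref{eq08}) partition $\Sigma_C$ into the pairwise disjoint blocks $\Sigma_1 \cap \Sigma_2$, $\Sigma_1 \setminus \Sigma_2$ and $\Sigma_2 \setminus \Sigma_1$. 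Writing $\left\vert{\mathcal{S}_C}\right\vert = \sum_{e \in \Sigma_C} \hat{f}_{\mathcal{S}_C}(e)$ and substituting (\ref{eq08}) on each block produces a sum of six terms.

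The next step is a regrouping of that sum. Collecting the $\hat{f}_{\mathcal{S}_1}$ terms over $(\Sigma_1 \cap \Sigma_2) \cup (\Sigma_1 \setminus \Sigma_2) = \Sigma_1$ reconstitutes $\sum_{e \in \Sigma_1} \hat{f}_{\mathcal{S}_1}(e) = \left\vert{\mathcal{S}_1}\right\vert$, and symmetrically the $\hat{f}_{\mathcal{S}_2}$ terms over $\Sigma_2$ give $\left\vert{\mathcal{S}_2}\right\vert$. What remains are the two constant contributions, namely $\left\vert{\Sigma_1 \setminus \Sigma_2}\right\vert\,\hat{f}_{\mathcal{S}_2}^{min}$ from the second block and $\left\vert{\Sigma_2 \setminus \Sigma_1}\right\vert\,\hat{f}_{\mathcal{S}_1}^{min}$ from the third. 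At this point I have $\left\vert{\mathcal{S}_C}\right\vert = \left\vert{\mathcal{S}_1}\right\vert + \left\vert{\mathcal{S}_2}\right\vert + \left\vert{\Sigma_1 \setminus \Sigma_2}\right\vert\,\hat{f}_{\mathcal{S}_2}^{min} + \left\vert{\Sigma_2 \setminus \Sigma_1}\right\vert\,\hat{f}_{\mathcal{S}_1}^{min}$.

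The crux, and the step I expect to be the main obstacle, is to express this surplus as a single scalar multiple $x\delta$ of $\delta = \hat{f}_{\mathcal{S}_1}^{min} + \hat{f}_{\mathcal{S}_2}^{min}$, since in general $\left\vert{\Sigma_1 \setminus \Sigma_2}\right\vert \neq \left\vert{\Sigma_2 \setminus \Sigma_1}\right\vert$ and the factoring is not immediate. The resolution is to exploit the convention recorded in Table \ref{notation}, namely $\hat{f}_{\mathcal{S}_i}^{min} = 0$ whenever $\left\vert{\Sigma_i}\right\vert < k$, so that a nonzero minimum forces the corresponding summary to be full. I would therefore argue by cases on which summaries are full. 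If both are full, $\left\vert{\Sigma_1}\right\vert = \left\vert{\Sigma_2}\right\vert = k$ forces $\left\vert{\Sigma_1 \setminus \Sigma_2}\right\vert = \left\vert{\Sigma_2 \setminus \Sigma_1}\right\vert = k - \left\vert{\Sigma_1 \cap \Sigma_2}\right\vert$, so setting the nonnegative integer $x := k - \left\vert{\Sigma_1 \cap \Sigma_2}\right\vert$ lets me factor the surplus as $x(\hat{f}_{\mathcal{S}_1}^{min} + \hat{f}_{\mathcal{S}_2}^{min}) = x\delta$. If exactly one summary is full the other minimum vanishes, annihilating one term, and the surviving term already has the form $x\delta$ with $x$ the cardinality of the relevant set difference; and if neither is full then $\delta = 0$ and the surplus is empty, so any $x$ (for instance $x = 0$) works. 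In every case the claimed identity (\ref{eq10}) holds, which completes the argument.
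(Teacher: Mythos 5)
Your proof is correct and follows essentially the same route as the paper: both arguments reduce the claim to the identity $\left\vert{\mathcal{S}_C}\right\vert = \left\vert{\mathcal{S}_1}\right\vert + \left\vert{\mathcal{S}_2}\right\vert + d_1 \hat{f}_{\mathcal{S}_2}^{min} + d_2 \hat{f}_{\mathcal{S}_1}^{min}$ (with $d_1 = \left\vert{\Sigma_1 \setminus \Sigma_2}\right\vert$, $d_2 = \left\vert{\Sigma_2 \setminus \Sigma_1}\right\vert$) and then exploit the convention $\left\vert{\Sigma_i}\right\vert < k \Leftrightarrow \hat{f}_{\mathcal{S}_i}^{min} = 0$, the only cosmetic difference being that the paper fixes $x = \left\vert{\Sigma_C}\right\vert - k$ up front and compresses your three-way case split into a single algebraic expansion and cancellation of $x\delta$. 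The values of $x$ you select in each case ($k - \left\vert{\Sigma_1 \cap \Sigma_2}\right\vert$ when both summaries are full, the relevant set-difference cardinality when exactly one is full, and arbitrary when $\delta = 0$) all coincide with the paper's $x = \left\vert{\Sigma_C}\right\vert - k$, so your argument establishes exactly the stated relation.
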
 

\noi where $x = \left\vert{\Sigma_C}\right\vert - k$.

\begin{proof}
Let $c = \left\vert{\Sigma_1 \cap \Sigma_2}\right\vert$, $d_1 = \left\vert{\Sigma_1 \setminus \Sigma_2}\right\vert $ and  $d_2 = \left\vert{\Sigma_2 \setminus \Sigma_1}\right\vert$. Then, $x = c + d_1 + d_2 - k$. It follows that 

\begin{equation}
\label{eq11}
\begin{aligned}
x \delta&=(c + d_1 + d_2 - k) \delta \\
&=(c + d_1) \hat{f}_{S_1}^{min} + (c + d_2) \hat{f}_{S_2}^{min}  -  \ k \hat{f}_{S_1}^{min} - k \hat{f}_{S_2}^{min} + d_1 \hat{f}_{\mathcal{S}_2}^{min} + d_2 \hat{f}_{\mathcal{S}_1}^{min}.
\end{aligned}
\end{equation}

\noi Since $\left\vert{\Sigma_1}\right\vert = c + d_1 \leq k$ and $\left\vert{\Sigma_2}\right\vert = c + d_2 \leq k$, and observing that $\left\vert{\Sigma_1}\right\vert < k \Leftrightarrow \hat{f}_{\mathcal{S}_1}^{min} = 0$ and $\left\vert{\Sigma_2}\right\vert < k \Leftrightarrow \hat{f}_{\mathcal{S}_2}^{min} = 0$, it follows that eq. (\ref{eq11}) reduces to

\begin{equation}
\label{eq12}
x \delta = d_1 \hat{f}_{\mathcal{S}_2}^{min} + d_2 \hat{f}_{\mathcal{S}_1}^{min}.
\end{equation}

\noi Therefore, we can rewrite eq. (\ref{eq10}) as

\begin{equation}
\label{eq13}
\left\vert{\mathcal{S}_C}\right\vert = \left\vert{\mathcal{S}_1}\right\vert + \left\vert{\mathcal{S}_2}\right\vert +  d_1 \hat{f}_{\mathcal{S}_2}^{min} + d_2 \hat{f}_{\mathcal{S}_1}^{min}.
\end{equation}

\noi This equation expresses the fact that the sum of the frequencies stored in $\mathcal{S}_C$ can be computed according to the way we combine the summaries in eq. (\ref{eq08}). Precisely, if $d_1 = 0$ and $d_2 = 0$, then $\mathcal{S}_1$ and $\mathcal{S}_2$ share all of the elements, so that

\begin{equation}
\label{eq14}
\left\vert{\mathcal{S}_C}\right\vert = \left\vert{\mathcal{S}_1}\right\vert + \left\vert{\mathcal{S}_2}\right\vert.
\end{equation}

\noi Otherwise, for items belonging to just one of the summaries, we add to their frequencies the minimum frequency of the other summary. In other words, besides their frequency (which is taken into account by $\left\vert{\mathcal{S}_1}\right\vert + \left\vert{\mathcal{S}_2}\right\vert$) we also add exactly $d_1 \hat{f}_{\mathcal{S}_2}^{min} + d_2 \hat{f}_{\mathcal{S}_1}^{min}$.

\end{proof}

\begin{lem}
\label{lemma2}
Let the summaries $\mathcal{S}_1$, $\mathcal{S}_2$ and $\mathcal{S}_C$, the input multisets $\mathcal{N}_1$ and $\mathcal{N}_2$ and the quantity $\delta$ be defined as in Lemma \ref{lemma1}. Assume that the following inequalities hold for each item $e \in N_1$: 

\begin{equation}
\label{eq15}
\hat{f}_{\mathcal{S}_1}(e) - \hat{f_\mathcal{S}}_1^{min} \leq \hat{f}_{\mathcal{S}_1}(e) - \hat{\varepsilon}_{\mathcal{S}_1}(e) \leq f_{\mathcal{N}_1}(e) \leq \hat{f}_{\mathcal{S}_1}(e), \qquad e \in \Sigma_1,
\end{equation}

\begin{equation}
\label{eq16}
f_{\mathcal{N}_1}(e)  \leq \hat{f_\mathcal{S}}_1^{min}, \qquad \hspace{50 mm} e \notin \Sigma_1.
\end{equation}

\noi Similarly, assume that the following inequalities hold for each item $e \in N_2$: 

\begin{equation}
\label{eq17}
\hat{f}_{\mathcal{S}_2}(e) - \hat{f_\mathcal{S}}_2^{min} \leq \hat{f}_{\mathcal{S}_2}(e) - \hat{\varepsilon}_{\mathcal{S}_2}(e) \leq f_{\mathcal{N}_2}(e) \leq \hat{f}_{\mathcal{S}_2}(e), \qquad  e \in \Sigma_2,
\end{equation}

\begin{equation}
\label{eq18}
f_{\mathcal{N}_2}(e)  \leq \hat{f_\mathcal{S}}_2^{min}, \qquad \hspace{50mm} e \notin \Sigma_2.
\end{equation}

\noi Then, for each item $e \in N$ we have:

\begin{equation}
\label{eq19}
\hat{f}_{\mathcal{S}_C}(e) - \delta \leq  \hat{f}_{\mathcal{S}_C}(e) -  \hat{\varepsilon}_{\mathcal{S}_C}(e) \leq    f_\mathcal{N}(e) \leq  \hat{f}_{\mathcal{S}_C}(e), \qquad  \hspace{6 mm} e \in \Sigma_C,
\end{equation}

\begin{equation}
\label{eq20}
f_\mathcal{N}(e) \leq \delta, \qquad \hspace{57 mm} e \notin \Sigma_C.
\end{equation}
 
\end{lem}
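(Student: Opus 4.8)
The plan is to reduce everything to a case analysis over the membership of an item $e$ in the underlying sets $\Sigma_1$ and $\Sigma_2$, exploiting the single fact that makes the join meaningful: since $\mathcal{N} = \mathcal{N}_1 \uplus \mathcal{N}_2$ and the join is the sum of frequency functions, we have $f_\mathcal{N}(e) = f_{\mathcal{N}_1}(e) + f_{\mathcal{N}_2}(e)$ for every $e \in N$. With this identity in hand, the hypotheses (\ref{eq15})--(\ref{eq18}) bound each of the two summands, while (\ref{eq08})--(\ref{eq09}) record exactly how $\hat{f}_{\mathcal{S}_C}$ and $\hat{\varepsilon}_{\mathcal{S}_C}$ are assembled from the pieces; each of the four inequalities in (\ref{eq19})--(\ref{eq20}) then follows by adding the corresponding per-subarray bounds. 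Throughout I use that $\Sigma_C = \Sigma_1 \cup \Sigma_2$.

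Before the main inequalities I would record the auxiliary bound $\hat{\varepsilon}_{\mathcal{S}_i}(e) \leq \hat{f}_{\mathcal{S}_i}^{min}$, which is just a rearrangement of the leftmost inequality in (\ref{eq15}) (resp. (\ref{eq17})). This immediately yields the first inequality of (\ref{eq19}), namely $\hat{\varepsilon}_{\mathcal{S}_C}(e) \leq \delta$: when $e \in \Sigma_1 \cap \Sigma_2$ we get $\hat{\varepsilon}_{\mathcal{S}_C}(e) = \hat{\varepsilon}_{\mathcal{S}_1}(e) + \hat{\varepsilon}_{\mathcal{S}_2}(e) \leq \hat{f}_{\mathcal{S}_1}^{min} + \hat{f}_{\mathcal{S}_2}^{min} = \delta$, while when $e \in \Sigma_1 \setminus \Sigma_2$ we get $\hat{\varepsilon}_{\mathcal{S}_C}(e) = \hat{\varepsilon}_{\mathcal{S}_1}(e) + \hat{f}_{\mathcal{S}_2}^{min} \leq \hat{f}_{\mathcal{S}_1}^{min} + \hat{f}_{\mathcal{S}_2}^{min} = \delta$, the case $e \in \Sigma_2 \setminus \Sigma_1$ being symmetric.

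For the remaining two inequalities of (\ref{eq19}) I would split on the same three cases. When $e \in \Sigma_1 \cap \Sigma_2$, both the upper bound $f_\mathcal{N}(e) \leq \hat{f}_{\mathcal{S}_C}(e)$ and the lower bound $\hat{f}_{\mathcal{S}_C}(e) - \hat{\varepsilon}_{\mathcal{S}_C}(e) \leq f_\mathcal{N}(e)$ come from adding the matching inequalities of (\ref{eq15}) and (\ref{eq17}) termwise, using $\hat{f}_{\mathcal{S}_C}(e) = \hat{f}_{\mathcal{S}_1}(e) + \hat{f}_{\mathcal{S}_2}(e)$ and $\hat{\varepsilon}_{\mathcal{S}_C}(e) = \hat{\varepsilon}_{\mathcal{S}_1}(e) + \hat{\varepsilon}_{\mathcal{S}_2}(e)$. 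The slightly more delicate cases are the asymmetric ones. For $e \in \Sigma_1 \setminus \Sigma_2$, the upper bound follows from $f_{\mathcal{N}_1}(e) \leq \hat{f}_{\mathcal{S}_1}(e)$ together with the not-monitored bound $f_{\mathcal{N}_2}(e) \leq \hat{f}_{\mathcal{S}_2}^{min}$ from (\ref{eq18}), giving $f_\mathcal{N}(e) \leq \hat{f}_{\mathcal{S}_1}(e) + \hat{f}_{\mathcal{S}_2}^{min} = \hat{f}_{\mathcal{S}_C}(e)$; for the lower bound, the key point is that the two $\hat{f}_{\mathcal{S}_2}^{min}$ contributions cancel, leaving $\hat{f}_{\mathcal{S}_C}(e) - \hat{\varepsilon}_{\mathcal{S}_C}(e) = \hat{f}_{\mathcal{S}_1}(e) - \hat{\varepsilon}_{\mathcal{S}_1}(e) \leq f_{\mathcal{N}_1}(e) \leq f_\mathcal{N}(e)$, the last step using $f_{\mathcal{N}_2}(e) \geq 0$. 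The case $e \in \Sigma_2 \setminus \Sigma_1$ is symmetric.

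Finally, (\ref{eq20}) is the cleanest: if $e \notin \Sigma_C = \Sigma_1 \cup \Sigma_2$, then $e$ is monitored by neither summary, so (\ref{eq16}) and (\ref{eq18}) give $f_{\mathcal{N}_1}(e) \leq \hat{f}_{\mathcal{S}_1}^{min}$ and $f_{\mathcal{N}_2}(e) \leq \hat{f}_{\mathcal{S}_2}^{min}$, whence $f_\mathcal{N}(e) = f_{\mathcal{N}_1}(e) + f_{\mathcal{N}_2}(e) \leq \delta$. The argument carries no genuine obstacle; the only things to watch are the bookkeeping in the asymmetric cases, the observation that $\hat{\varepsilon}_{\mathcal{S}_i}(e) \leq \hat{f}_{\mathcal{S}_i}^{min}$ is what drives the error bound, and the cancellation of the $\hat{f}_{\mathcal{S}_j}^{min}$ terms that makes the lower bound come out exactly.
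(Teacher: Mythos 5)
Your proof is correct and takes essentially the same route as the paper's: the same case analysis over $e \in \Sigma_1 \cap \Sigma_2$, $e \in \Sigma_1 \setminus \Sigma_2$, $e \in \Sigma_2 \setminus \Sigma_1$ and $e \notin \Sigma_C$, combining the per-subarray hypotheses (\ref{eq15})--(\ref{eq18}) with the combine rules (\ref{eq08})--(\ref{eq09}) through the join identity $f_{\mathcal{N}}(e) = f_{\mathcal{N}_1}(e) + f_{\mathcal{N}_2}(e)$. Your preliminary extraction of the bound $\hat{\varepsilon}_{\mathcal{S}_i}(e) \leq \hat{f}_{\mathcal{S}_i}^{min}$ and the explicit cancellation of the $\hat{f}_{\mathcal{S}_2}^{min}$ terms in the asymmetric case are merely a cleaner packaging of the add-and-subtract rewriting the paper performs in its second case, not a different argument.
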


\begin{proof}

The summary $\mathcal{S}_C$ is derived from $\mathcal{S}_1$ and $\mathcal{S}_2$ by applying eqs. (\ref{eq08}) and (\ref{eq09}), so that, in order to prove eq. (\ref{eq19}) we need to distinguish three cases:

\begin{enumerate}
  \item Let $e \in \Sigma_1 \cap \Sigma_2$: $\hat{f}_{\mathcal{S}_1}(e) + \hat{f}_{\mathcal{S}_2}(e) - (\hat{f}_{\mathcal{S}_1}^{min} + \hat{f}_{\mathcal{S}_2}^{min}) \leq \hat{f}_{\mathcal{S}_1}(e) + \hat{f}_{\mathcal{S}_2}(e) - (\hat{\varepsilon}_{\mathcal{S}_1}(e) + \hat{\varepsilon}_{\mathcal{S}_2}(e)) \leq f_{\mathcal{N}_1}(e) + f_{\mathcal{N}_2}(e) \leq \hat{f}_{\mathcal{S}_1}(e) + \hat{f}_{\mathcal{S}_2}(e)$. But, by definition in this case it is $\hat{f}_{\mathcal{S}_C}(e) = \hat{f}_{\mathcal{S}_1}(e) + \hat{f}_{\mathcal{S}_2}(e)$, $\hat{\varepsilon}_{\mathcal{S}_C}(e) = \hat{\varepsilon}_{\mathcal{S}_1}(e) + \hat{\varepsilon}_{\mathcal{S}_2}(e)$, 
       $f_{\mathcal{N}}(e) = f_{\mathcal{N}_1}(e) + f_{\mathcal{N}_2}(e)$ and $\delta = \hat{f}_{\mathcal{S}_1}^{min} + \hat{f}_{\mathcal{S}_2}^{min}$, so that, taking into account eqs. (\ref{eq15}) and (\ref{eq17}), eq. (\ref{eq19}) holds;
  
  \item Let $e \in \Sigma_1 \setminus \Sigma_2$: following the same reasoning as before, taking into account eqs. (\ref{eq15}) and (\ref{eq18}) and that in this case it is by definition $\hat{f}_{\mathcal{S}_C}(e) = \hat{f}_{\mathcal{S}_1}(e) + \hat{f}_{\mathcal{S}_2}^{min}$ and $\hat{\varepsilon}_{\mathcal{S}_C}(e) = \hat{\varepsilon}_{\mathcal{S}_1}(e) + \hat{f}_{\mathcal{S}_2}^{min}$, we obtain $\hat{f}_{\mathcal{S}_1}(e) - \hat{f}_{\mathcal{S}_1}^{min} \leq \hat{f}_{\mathcal{S}_1}(e) - \hat{\varepsilon}_{\mathcal{S}_1}(e) \leq f_{\mathcal{N}_1}(e) + f_{\mathcal{N}_2}(e) \leq \hat{f}_{\mathcal{S}_1}(e) + \hat{f}_{\mathcal{S}_2}^{min}$, so that $\hat{f}_{\mathcal{S}_1}(e) - \hat{f}_{\mathcal{S}_1}^{min} \leq \hat{f}_{\mathcal{S}_1}(e) - \hat{\varepsilon}_{\mathcal{S}_1}(e) \leq f_{\mathcal{N}}(e) \leq \hat{f}_{\mathcal{S}_C}(e)$. Rewriting $\hat{f}_{\mathcal{S}_1}(e) - \hat{f}_{\mathcal{S}_1}^{min}$ as $\hat{f}_{\mathcal{S}_1}(e) + \hat{f}_{\mathcal{S}_2}^{min} - \hat{f}_{\mathcal{S}_2}^{min} - \hat{f}_{\mathcal{S}_1}^{min}$, and $\hat{f}_{\mathcal{S}_1}(e) - \hat{\varepsilon}_{\mathcal{S}_1}(e)$ as $\hat{f}_{\mathcal{S}_1}(e) + \hat{f}_{\mathcal{S}_2}^{min} - \hat{f}_{\mathcal{S}_2}^{min} - \hat{\varepsilon}_{\mathcal{S}_1}(e)$ we obtain eq. (\ref{eq19});

  \item Let $e \in \Sigma_2 \setminus \Sigma_1$: immediate, taking into account eqs. (\ref{eq16}) and (\ref{eq17}), since this case is symmetric to the previous one.
\end{enumerate}

To prove eq. (\ref{eq20}), taking into account eqs. (\ref{eq16}) and (\ref{eq18}) we obtain for an item $e \notin \Sigma_C$: $f_{\mathcal{N}_1}(e) + f_{\mathcal{N}_2}(e) \leq \hat{f}_{\mathcal{S}_1}^{min} + \hat{f}_{\mathcal{S}_2}^{min}$, i.e., $f_{\mathcal{N}}(e) \leq \delta$.

\end{proof}

Now we can formally prove the correctness of our parallel algorithm. Let us consider how it works. Before engaging in the parallel reduction, each processor applies the sequential Space Saving algorithm to its local input, producing an hash table data structure containing at most $k$ counters with estimated frequency greater than zero. In the parallel reduction, we merge pairs of data structures until we output the final result. 

We start by proving the following

\begin{thm}
\label{pr-single-step}
A single reduction sub-step correctly merges its two input summaries.
\end{thm}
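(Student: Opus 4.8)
The plan is to show that the output $\mathcal{S}_M$ is itself a legitimate Space Saving summary of the merged input $\mathcal{N} = \mathcal{N}_1 \uplus \mathcal{N}_2$; that is, that $\mathcal{S}_M$ satisfies the analogues of the invariants (\ref{eq05}), (\ref{eq06}) and (\ref{eq07}) with its own minimum $\hat{f}_{\mathcal{S}_M}^{min}$ in place of $\hat{f}_{\mathcal{S}}^{min}$. I take as working hypotheses that $\mathcal{S}_1$ and $\mathcal{S}_2$ are valid summaries of $\mathcal{N}_1$ and $\mathcal{N}_2$, so that the premises (\ref{eq15})--(\ref{eq18}) of Lemma \ref{lemma2} hold together with $\hat{f}_{\mathcal{S}_i}^{min} \le \lfloor |\mathcal{N}_i|/k \rfloor$ and $|\mathcal{S}_i| \le |\mathcal{N}_i|$; phrasing the claim this way is precisely what allows the single-step result to be iterated across the $O(\log p)$ levels of the reduction. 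The argument splits into first transferring the guarantees of Lemma \ref{lemma2} from the combined summary $\mathcal{S}_C$ to $\mathcal{S}_M$, and then controlling $\hat{f}_{\mathcal{S}_M}^{min}$ by means of Lemma \ref{lemma1}.

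First I would record the elementary but pivotal fact that every combined frequency is at least $\delta$: inspecting the three branches of (\ref{eq08}), in each case $\hat{f}_{\mathcal{S}_C}(e)$ is a sum of two terms bounded below by $\hat{f}_{\mathcal{S}_1}^{min}$ and $\hat{f}_{\mathcal{S}_2}^{min}$ respectively, so $\hat{f}_{\mathcal{S}_C}(e) \ge \hat{f}_{\mathcal{S}_1}^{min} + \hat{f}_{\mathcal{S}_2}^{min} = \delta$ for all $e \in \Sigma_C$. Consequently $\hat{f}_{\mathcal{S}_C}^{min} \ge \delta$, and since pruning only discards the smallest counters, $\hat{f}_{\mathcal{S}_M}^{min} \ge \delta$ as well. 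With this in hand I verify (\ref{eq05}) and (\ref{eq06}) for $\mathcal{S}_M$ across the two branches of the reduction: $|\Sigma_C| \le k$, where $\mathcal{S}_M = \mathcal{S}_C$, and $|\Sigma_C| > k$, where $\mathcal{S}_M$ retains the $k$ counters of largest frequency. For a retained item the pair $(\hat{f}_{\mathcal{S}_M}(e), \hat{\varepsilon}_{\mathcal{S}_M}(e))$ coincides with $(\hat{f}_{\mathcal{S}_C}(e), \hat{\varepsilon}_{\mathcal{S}_C}(e))$, so the middle and right inequalities of (\ref{eq05}) are just (\ref{eq19}), while the left inequality follows because (\ref{eq19}) forces $\hat{\varepsilon}_{\mathcal{S}_C}(e) \le \delta \le \hat{f}_{\mathcal{S}_M}^{min}$. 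For (\ref{eq06}), an item $e \notin \Sigma_M$ is either absent from $\Sigma_C$, where (\ref{eq20}) gives $f_{\mathcal{N}}(e) \le \delta \le \hat{f}_{\mathcal{S}_M}^{min}$, or was discarded during pruning, where $f_{\mathcal{N}}(e) \le \hat{f}_{\mathcal{S}_C}(e) \le \hat{f}_{\mathcal{S}_M}^{min}$ because the discarded counters are exactly those below the retained minimum.

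The remaining and most delicate step is the threshold bound (\ref{eq07}), $\hat{f}_{\mathcal{S}_M}^{min} \le \lfloor |\mathcal{N}|/k \rfloor$, on which the whole no-false-negative guarantee rests; the difficulty is that $\mathcal{S}_C$ systematically over-counts, so the naive averaging bound $\hat{f}_{\mathcal{S}_M}^{min} \le |\mathcal{S}_C|/k$ is too weak. This is exactly where Lemma \ref{lemma1} enters. When $|\Sigma_C| \le k$ I would argue that the cross terms in (\ref{eq13}) vanish: if, say, $d_1 > 0$ and $\hat{f}_{\mathcal{S}_2}^{min} > 0$ then $|\Sigma_2| = k$, forcing $|\Sigma_C| = k + d_1 > k$, a contradiction; hence $|\mathcal{S}_C| = |\mathcal{S}_1| + |\mathcal{S}_2| \le n$, and either $\hat{f}_{\mathcal{S}_M}^{min} = 0$ (if $|\Sigma_C| < k$) or $\hat{f}_{\mathcal{S}_M}^{min} \le |\mathcal{S}_C|/k \le n/k$. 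When $|\Sigma_C| > k$, Lemma \ref{lemma1} gives $|\mathcal{S}_C| = |\mathcal{S}_1| + |\mathcal{S}_2| + x\delta$ with $x = |\Sigma_C| - k$ the number of counters removed by the pruning; since each removed counter has frequency at least $\delta$, pruning subtracts at least $x\delta$ from the total, whence $|\mathcal{S}_M| \le |\mathcal{S}_1| + |\mathcal{S}_2| \le n$. In both branches $|\mathcal{S}_M| \le n$ with $k$ retained counters, so $k\,\hat{f}_{\mathcal{S}_M}^{min} \le |\mathcal{S}_M| \le n$ and, by integrality, $\hat{f}_{\mathcal{S}_M}^{min} \le \lfloor n/k \rfloor$. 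I expect this exact balancing of the over-count $x\delta$ against the $x$ pruned counters, each provably $\ge \delta$, to be the crux of the argument, with the vanishing of the cross terms in the $|\Sigma_C| \le k$ case as the secondary subtlety to handle carefully.
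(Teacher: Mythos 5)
Your proposal is correct and follows essentially the same route as the paper's proof: you show that the invariants of eqs. (\ref{eq05})--(\ref{eq07}), together with the relaxed cardinality bound $\left\vert\mathcal{S}\right\vert \leq \left\vert\mathcal{N}\right\vert$ of eq. (\ref{eq21}), are preserved by the merge, using Lemma \ref{lemma1} to balance the over-count $x\delta$ against the $x$ pruned counters (each of frequency at least $\delta$), Lemma \ref{lemma2} for the frequency and error bounds, and the averaging bound $\hat{f}_{\mathcal{S}_M}^{min} \leq \left\vert\mathcal{S}_M\right\vert / k \leq \left\lfloor \left\vert\mathcal{N}\right\vert / k \right\rfloor$ to recover eq. (\ref{eq07}), exactly as in eqs. (\ref{eq22})--(\ref{eq26}) of the paper. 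The only differences are presentational refinements within the same argument, e.g.\ your explicit two-case verification of eq. (\ref{eq06}) for items discarded during pruning and the direct vanishing of the cross terms $d_1 \hat{f}_{\mathcal{S}_2}^{min} + d_2 \hat{f}_{\mathcal{S}_1}^{min}$ when $\left\vert\Sigma_C\right\vert \leq k$, which the paper handles with the brief observation that $x\delta = 0$.
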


\begin{proof}
Let $\mathcal{S}_1 = (\Sigma_1, \hat{f}_{\mathcal{S}_1})$ and $\mathcal{S}_2 = (\Sigma_2, \hat{f}_{\mathcal{S}_2})$ be two summaries related respectively to the input sub-arrays $\mathcal{N}_1 = (N_1, f_{\mathcal{N}_1})$ and $\mathcal{N}_2 = (N_2, f_{\mathcal{N}_2})$, with $\mathcal{N} = \mathcal{N}_1 \uplus \mathcal{N}_2 = (N, f_{\mathcal{N}})$. Let $\mathcal{S}_C = (\Sigma_C, \hat{f}_{\mathcal{S}_C})$ be the intermediate summary obtained combining $\mathcal{S}_1$ and $\mathcal{S}_2$ by using the \textit{COMBINE} function, and let $\mathcal{S}_M = (\Sigma_M, \hat{f}_{\mathcal{S}_M})$ be the final merged summary. 
%Moreover, let $\hat{\varepsilon}_{\mathcal{S}_C}(e)$ and $\hat{\varepsilon}_{\mathcal{S}_M}(e)$ be respectively the error of item $e$ in $\mathcal{S}_C$ and in $\mathcal{S}_M$. Finally, we denote as usual by $\hat{f}_{\mathcal{S}_1}^{min}$ and $\hat{f}_{\mathcal{S}_2}^{min}$ respectively the minimum of counters' frequency in $\mathcal{S}_1$ and $\mathcal{S}_2$ and set them to zero  when the cardinality of the corresponding underlying set is less than $k$.

We are going to prove that if eqs. (\ref{eq05}) - (\ref{eq07}) hold for $\mathcal{S}_1$ and $\mathcal{S}_2$ and, if it is verified a relaxed version of eq. (\ref{eq04}), i.e., for a summary $\mathcal{S}$ it holds that

\begin{equation}
\label{eq21}
\left\vert{\mathcal{S}}\right\vert \leq \left\vert{\mathcal{N}}\right\vert,
\end{equation}

\noi then these properties  continue to be true also for $\mathcal{S}_M$ (it is worth noting here that eq. (\ref{eq21}) also holds for summaries produced by the sequential Space Saving algorithm). We shall show that this is enough to guarantee the correctness of the merge operation. 
 
The \textit{merge} operation is done in two steps and provides as output a summary of at most $k$ items. In the first step we combine the input summaries as shown in eqs. (\ref{eq08}) and (\ref{eq09}). This combining step leads to an intermediate summary $\mathcal{S}_C$ storing at most $2k$ distinct items.

In the second and final step, we analyze $\mathcal{S}_C$ in order to return the final output. If $\mathcal{S}_C$ holds at most $k$ entries (i.e., $\left\vert{\Sigma_C}\right\vert \leq k$), we return $\mathcal{S}_C$ as the output, i.e., $\mathcal{S}_M = \mathcal{S}_C$. However, if $\mathcal{S}_C$ holds more than $k$ entries (the data structure may hold in the worst case up to $2k$ entries), we need to select and return $k$ entries. In this case, we simply return as $\mathcal{S}_M$ the last $k$ entries in $\mathcal{S}_C$, i.e., those corresponding to the items with greatest frequency (the entries are sorted by counters' frequency). 

We start by noting that in the summary $\mathcal{S}_C$ generated by the first step a counter's frequency still represents an excess estimation of the monitored item, as in Space Saving. As before, let $\delta = \hat{f}_{S_1}^{min} +\hat{f}_{S_2}^{min}$, and $x = \left\vert{\Sigma_C}\right\vert - k$.

%By Lemma \ref{prop1} it holds that for each item $e \in N$
%
%
%\begin{equation}
%\label{eq22}
%\hat{f}_{\mathcal{S}_C}(e) - \delta \leq \hat{f}_{\mathcal{S}_C}(e) - \hat{\varepsilon}_{\mathcal{S}_C}(e) \leq f_\mathcal{N}(e) \leq  \hat{f}_{\mathcal{S}_C}(e), \qquad e \in \Sigma_C
%\end{equation}
%
%and
%
%\begin{equation}
%\label{eq23}
%f_\mathcal{N}(e) \leq \delta, \qquad e \notin \Sigma_C
%\end{equation}

%Now, let $x = \left\vert{\Sigma_C}\right\vert - k$. By Lemma \ref{prop2}:
 
%\begin{equation}
%\label{eq24}
%\left\vert{\mathcal{S}_C}\right\vert = \left\vert{\mathcal{S}_1}\right\vert + \left\vert{\mathcal{S}_2}\right\vert + x \delta
%\end{equation}

By Lemma \ref{lemma1}, eq. (\ref{eq10}), if $\left\vert{\Sigma_C}\right\vert \leq k$, then $x \delta = 0$ (indeed, when $\left\vert{\Sigma_C}\right\vert < k$  then $\delta = 0$, when $\left\vert{\Sigma_C}\right\vert = k$ then $x = 0$) and the merged summary $\mathcal{S}_M$ coincides with $\mathcal{S}_C$. In that case, since by eq. (\ref{eq21}) $\left\vert\mathcal{S}_1\right\vert \leq \left\vert\mathcal{N}_1\right\vert$ and $\left\vert\mathcal{S}_2\right\vert \leq \left\vert\mathcal{N}_2\right\vert$, we have that $\left\vert{\mathcal{S}_M}\right\vert = \left\vert{\mathcal{S}_1}\right\vert + \left\vert{\mathcal{S}_2}\right\vert \leq \left\vert{\mathcal{N}}\right\vert$, so that eq. (\ref{eq21}) also holds for $\mathcal{S}_M$. Otherwise, if $\left\vert{\Sigma_C}\right\vert > k$, in order to obtain the final merged summary, we return in $\mathcal{S}_M$ the $k$ items in $\mathcal{S}_C$ with the highest frequencies. Precisely, let the entries in $\mathcal{S}_C$ be sorted in ascending order with regard to the counters' frequencies. Then,

\begin{equation}
\label{eq22}
\left\vert{\mathcal{S}_M}\right\vert = \left\vert{\mathcal{S}_1}\right\vert + \left\vert{\mathcal{S}_2}\right\vert + x \delta - \sum_{i=1}^x \hat{f}_{\mathcal{S}_C}(e_i),
\end{equation}

\noi where the sum is extended over the first $x$ entries. We observe that $\sum_{i=1}^x \hat{f}_{S_C}(e_i) \geq x \delta$, owing to the fact that the counters appear in sorted order, and the estimated frequencies stored in each of the initial $x$ counters we are discarding are greater than or equal to $\delta$. In this case too, it follows that eq. (\ref{eq21}) holds for $\mathcal{S}_M$. Indeed,

\begin{equation}
\label{eq23}
\left\vert{\mathcal{S}_M}\right\vert \leq \left\vert{\mathcal{S}_1}\right\vert + \left\vert{\mathcal{S}_2}\right\vert \leq \left\vert{\mathcal{N}_1}\right\vert + \left\vert{\mathcal{N}_2}\right\vert = \left\vert{\mathcal{N}}\right\vert.
\end{equation}

We have to prove that the other properties are verified as well. In particular we have to show that the error bound guaranteed by the sequential Space Saving algorithm is preserved by the merge operation. In order to do this, we observe that $\hat{f}_{\mathcal{S}_M}^{min}$ is such that (see the similar proof of Lemma 3.3 in \cite{Metwally2006})

\begin{equation}
\label{eq24}
\hat{f}_{\mathcal{S}_M}^{min} =  \frac{\left\vert{\mathcal{S}_M}\right\vert - \sum_{e \in \Sigma_M} (\hat{f}_{\mathcal{S}_M}(e) - \hat{f}_{\mathcal{S}_M}^{min})}{k}.
\end{equation}

At the same time, $\sum_{e \in \Sigma_M} (\hat{f}_{\mathcal{S}_M}(e) - \hat{f}_{\mathcal{S}_M}^{min}) \geq 0$, because the frequency of each item is greater than or equal to the minimum. Therefore we have:

\begin{equation}
\label{eq25}
\hat{f}_{\mathcal{S}_M}^{min} \leq  \frac{\left\vert{\mathcal{S}_M}\right\vert}{k}.
\end{equation}

Observing that $\hat{f}_{\mathcal{S}_M}^{min} \geq \delta$ and taking into account eq. (\ref{eq25}) and the fact that eq. (\ref{eq21}) also holds for $\mathcal{S}_M$, we can bound $\hat{f}_{\mathcal{S}_M}^{min}$ as follows:

\begin{equation}
\label{eq26}
\delta \leq \hat{f}_{\mathcal{S}_M}^{min} \leq  \frac{\left\vert{\mathcal{S}_M}\right\vert}{k} \leq \left\lfloor\frac{\left\vert{\mathcal{N}}\right\vert}{k}\right\rfloor.
\end{equation}

At last, taking into account Lemma \ref{lemma2}, eqs. (\ref{eq19}) and (\ref{eq20}) and the way we construct $\mathcal{S}_M$, we have that, for each item $e \in N$ (i.e., for each distinct item in the input $\mathcal{N} = \mathcal{N}_1 \uplus \mathcal{N}_2$):

\begin{equation}
\label{eq27}
\hat{f}_{\mathcal{S}_M}(e) - \hat{f}_{\mathcal{S}_M}^{min} \leq \hat{f}_{\mathcal{S}_M}(e) - \hat{\varepsilon}_{\mathcal{S}_M}(e) \leq f_\mathcal{N}(e) \leq  \hat{f}_{\mathcal{S}_M}(e), \qquad e \in \Sigma_M,
\end{equation}

and

\begin{equation}
\label{eq28}
f_\mathcal{N}(e) \leq  \hat{f}_{\mathcal{S}_M}^{min} \leq \left\lfloor\frac{\left\vert{\mathcal{N}}\right\vert}{k}\right\rfloor, \qquad \hspace{40 mm} e \notin \Sigma_M,
\end{equation}

\noi showing that eqs. (\ref{eq05}) - (\ref{eq07}) also hold for $\mathcal{S}_M$.

\end{proof}

\noi It is worth noting here that a single reduction step (i.e., a parallel execution with $p = 2$ processors) is fully equivalent to a sequential algorithm for merging two data summaries. Therefore, Theorem \ref{pr-single-step} states the correctness of this algorithm.  We can now prove the following

\begin{prop}
\label{pr-whole}
The whole parallel reduction correctly merges its input summaries.
\end{prop}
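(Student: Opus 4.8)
The plan is to prove the statement by induction on the number of reduction sub-steps, using Theorem \ref{pr-single-step} as the engine of the inductive step. The parallel reduction organizes the $p$ processors as the leaves of a binary tree of height $O(\log p)$; each internal node corresponds to a single \textit{COMBINE}/merge sub-step that takes the two summaries produced by its children and returns a single merged summary $\mathcal{S}_M$. Since the join operator $\uplus$ is associative and commutative, the sub-array summarized at any node of the tree is exactly the multiset union $\biguplus_{i \in A} \mathcal{N}_i$ of the leaf blocks $A$ lying below that node, independently of the precise shape of the tree. The goal is therefore to show that the summary held at the root correctly summarizes $\mathcal{N} = \biguplus_{i} \mathcal{N}_i$ in the Space Saving sense, which is exactly what makes the subsequent $Pruned$ step return precisely the potential frequent items.

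The key to making the induction go through is to identify the correct inductive invariant. I would take it to be the conjunction of the Space Saving error bounds (\ref{eq05})--(\ref{eq07}) together with the \emph{relaxed} cardinality inequality (\ref{eq21}), rather than the exact equality (\ref{eq04}). The relaxed form is essential: whenever a merge discards the $x = \left\vert{\Sigma_C}\right\vert - k$ least frequent entries of $\mathcal{S}_C$, the exact bookkeeping of (\ref{eq04}) is lost and only the inequality $\left\vert{\mathcal{S}_M}\right\vert \leq \left\vert{\mathcal{N}}\right\vert$ survives (cf. eq. (\ref{eq23})). For the base case, each leaf runs the sequential Space Saving algorithm on its block $\mathcal{N}_i$; by its proven correctness the resulting summary satisfies (\ref{eq04})--(\ref{eq07}), and hence \emph{a fortiori} the invariant, since (\ref{eq04}) implies (\ref{eq21}).

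For the inductive step, I would consider any internal node whose two children, by the inductive hypothesis, hold summaries $\mathcal{S}_1$ and $\mathcal{S}_2$ satisfying the invariant relative to the disjoint sub-multisets $\mathcal{N}_1$ and $\mathcal{N}_2$ they summarize. These are precisely the hypotheses of Theorem \ref{pr-single-step}, which then guarantees that the merged summary $\mathcal{S}_M$ satisfies the error bounds (\ref{eq27})--(\ref{eq28}) and the relaxed cardinality bound (\ref{eq23}) relative to $\mathcal{N}_1 \uplus \mathcal{N}_2$. But (\ref{eq27})--(\ref{eq28}) are exactly (\ref{eq05})--(\ref{eq07}) written for $\mathcal{S}_M$, and (\ref{eq23}) is exactly (\ref{eq21}); hence the invariant is self-reproducing and is re-established at the parent node. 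Iterating up the tree through all $O(\log p)$ levels, the root summary satisfies (\ref{eq05})--(\ref{eq07}) and (\ref{eq21}) with respect to the entire input $\mathcal{N}$.

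The main obstacle is not the induction itself, which is routine once the invariant is fixed, but rather ensuring that Theorem \ref{pr-single-step} is genuinely applicable at every internal node. This requires verifying two things at each application: first, that the multiset decomposition composes correctly, so that the input multiset attached to a node really is the join of its children's input multisets (guaranteed by associativity and commutativity of $\uplus$ together with disjointness of the leaf blocks); and second, that the invariant maintained by the children is \emph{exactly} the hypothesis Theorem \ref{pr-single-step} requires --- which is why it matters that the theorem was stated to both assume and conclude the relaxed invariant rather than (\ref{eq04}). With these in place, a final appeal to (\ref{eq28}) at the root shows that every item absent from the root summary has frequency at most $\left\lfloor\frac{\left\vert{\mathcal{N}}\right\vert}{k}\right\rfloor$ and so cannot be a $k$-majority element, while every genuine frequent item is retained, establishing that the whole reduction correctly merges its input summaries.
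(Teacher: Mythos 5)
Your proof is correct and takes essentially the same approach as the paper's: both propagate the invariant consisting of eqs.~(\ref{eq05})--(\ref{eq07}) together with the relaxed cardinality bound (\ref{eq21}) from the leaf summaries produced by sequential Space Saving through each reduction sub-step via Theorem~\ref{pr-single-step}. Your write-up simply formalizes, as an explicit induction on the reduction tree, what the paper states in a compressed three-sentence argument, and your remark that (\ref{eq21}) rather than the exact equality (\ref{eq04}) is the property that survives pruning is precisely the point the paper's single-step theorem was designed to deliver.
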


\begin{proof}

The correctness of the whole parallel reduction follows straightforwardly. Indeed, it is enough noting that in the initial step of the parallel reduction we process summaries derived by applying locally in each processor Space Saving, and eqs. (\ref{eq05}) - (\ref{eq07}) and (\ref{eq21}) hold for these summaries. By Theorem \ref{pr-single-step}, the merge operation used in each sub-step of the parallel reduction outputs a summary for which eqs. (\ref{eq05}) - (\ref{eq07}) and (\ref{eq21}) continue to hold and whose error is still within the bound on the error committed estimating the frequencies guaranteed by Space Saving. Therefore, at the end of the $O(\log~p)$ steps required for the whole reduction, the final output summary correctly provides the frequent items for the whole input.

\end{proof}

The main result of this Section is the following 

\begin{thm} Algorithm \ref{pss} correctly determines frequent items in parallel.
\end{thm}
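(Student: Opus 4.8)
The plan is to assemble the global correctness claim from three ingredients that are individually already in place: the correctness of the local Space Saving computations, the correctness of the whole parallel reduction (Proposition \ref{pr-whole}), and the soundness of the final pruning step carried out by \textit{Pruned}. The phrase ``correctly determines frequent items'' must be read in the same sense as for the sequential algorithm: the output is required to contain \emph{every} true frequent item (no false negatives), to respect the error bounds (\ref{eq05})--(\ref{eq07}), and is allowed to contain some false positives. With this reading, the proof amounts to tracking these invariants from the leaves of the reduction tree down to the call to \textit{Pruned}.

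First I would verify that the domain decomposition in lines 2--3 of Algorithm \ref{pss} produces a genuine partition of the input multiset, namely $\mathcal{N} = \biguplus_{i=0}^{p-1} \mathcal{N}_i$ with $\sum_i \left\vert{\mathcal{N}_i}\right\vert = n$, so that the join semantics assumed throughout Section \ref{correctness} applies. Next comes the base case: each processor computes $local$ by invoking the sequential \textit{SpaceSaving} on its block, and since that routine is correct, the resulting summary satisfies eqs. (\ref{eq05})--(\ref{eq07}) together with the relaxed bound (\ref{eq21}) with respect to its own $\mathcal{N}_i$. These are exactly the hypotheses required to feed the reduction, so Proposition \ref{pr-whole} applies and guarantees that, after the $O(\log p)$ reduction steps, the final merged summary $\mathcal{S}_M$ held by the processor of rank $0$ (the variable $global$ in the pseudocode) satisfies eqs. (\ref{eq05})--(\ref{eq07}) and (\ref{eq21}) with respect to the whole input $\mathcal{N}$.

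It then remains to show that \textit{Pruned} extracts a correct candidate set from $\mathcal{S}_M$. Combining eqs. (\ref{eq06}) and (\ref{eq07}) for $\mathcal{S}_M$, any item $e \notin \Sigma_M$ has $f_\mathcal{N}(e) \leq \hat{f}_{\mathcal{S}_M}^{min} \leq \left\lfloor n/k \right\rfloor < \left\lfloor n/k \right\rfloor + 1$, so no frequent item can fail to be monitored; hence every element of $W$ appears in $\mathcal{S}_M$. Moreover, by the right-hand inequality of eq. (\ref{eq05}) the stored frequency over-estimates the true one, $f_\mathcal{N}(e) \leq \hat{f}_{\mathcal{S}_M}(e)$, so any frequent item with $f_\mathcal{N}(e) \geq \left\lfloor n/k \right\rfloor + 1$ necessarily has $\hat{f}_{\mathcal{S}_M}(e)$ at least the threshold and therefore survives the linear scan of \textit{Pruned}. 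Consequently the returned set $result$ contains every element of $W$; the remaining returned items are merely \emph{potential} frequent items, exactly as in the sequential case, and those additionally satisfying $\hat{f}_{\mathcal{S}_M}(e) - \hat{\varepsilon}_{\mathcal{S}_M}(e) \geq \left\lfloor n/k \right\rfloor + 1$ can be certified as genuinely frequent.

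The proof is largely a matter of bookkeeping, since Theorem \ref{pr-single-step} and Proposition \ref{pr-whole} have already discharged the delicate invariants through the reduction tree, reducing the present statement to a composition of those results with the semantics of the local computation and of \textit{Pruned}. The only point requiring genuine care is the final extraction: one must confirm that the over-estimation direction of the error bound prevents any false negative precisely at the threshold $\left\lfloor n/k \right\rfloor + 1$, and that pruning below this threshold never discards a monitored item whose true frequency reaches it. This follows immediately from eqs. (\ref{eq27})--(\ref{eq28}) (the concrete form of (\ref{eq05})--(\ref{eq07}) for $\mathcal{S}_M$), so I expect no substantial obstacle beyond fixing the correctness semantics precisely and invoking the preceding results.
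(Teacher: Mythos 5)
Your proposal is correct and follows essentially the same route as the paper, which proves this theorem in one line by composing Theorem \ref{pr-single-step} with Proposition \ref{pr-whole}; your additional bookkeeping (the partition $\mathcal{N} = \biguplus_i \mathcal{N}_i$, the base case from the sequential algorithm, and the check that \textit{Pruned} cannot drop a true frequent item because eqs.\ (\ref{eq27})--(\ref{eq28}) force $\hat{f}_{\mathcal{S}_M}(e) \geq f_\mathcal{N}(e) \geq \left\lfloor n/k \right\rfloor + 1$ for every $e \in W$) simply makes explicit what the paper leaves implicit in Section \ref{spacesaving} and in the description of \textit{Pruned}. No gaps; your reading of ``correctly determines'' as no false negatives with possible false positives matches the paper's semantics exactly.
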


\begin{proof}
The result follows immediately from Theorem \ref{pr-single-step} and Proposition \ref{pr-whole}.
\end{proof}

\section{Parallel complexity}
\label{pss-analysis}
In this Section, we discuss the parallel complexity of the proposed parallel Space Saving algorithm. We assume, in the following analysis, that $k = O(1)$. The assumption is justified by the fact that it is verified in all of the cases of practical interest for this application.

At the beginning of the algorithm, the workload is balanced using a block
distribution; this is done with two simple $O(1)$ assignments; therefore, the complexity of the initial domain decomposition is $O(1)$. Next,
we determine local candidates in each subarray using the sequential Space Saving algorithm. Owing to the block distribution and to the fact that Space Saving is linear in the
number of input elements, the complexity of this step is $O(n/p)$. Then, we engage in a parallel reduction to
determine the global candidates for the whole input array. The whole reduction requires in the worst case $O(log~p)$.

Indeed, in the initial step we combine the input summaries in $O(k)$, by using hash tables. For each item in the $\mathcal{S}_1$ hash table, we try to find in $O(1)$ time a corresponding item in the $\mathcal{S}_2$ hash table. Then, we insert in the $\mathcal{S}_C$ hash table the entry for the item, again in $O(1)$ time and, if we have found the item, we delete the corresponding entry from $\mathcal{S}_2$ in $O(1)$ time. Since there are at most $k$ entries, this requires $O(k)$. We then scan the entries in $\mathcal{S}_2$  (there can be at most $k$ entries, this happens when the items in the two hash tables are all distinct, otherwise there will be less than $k$ entries because we remove corresponding items from $\mathcal{S}_2$ each time we find a match). For each entry in $\mathcal{S}_2$, we simply insert the corresponding item in $\mathcal{S}_C$ in $O(1)$ time. Therefore, processing $\mathcal{S}_2$ requires in the worst case $O(k)$ time.

In the second step, we simply return the combined summaries if the total number of entries in $\mathcal{S}_C$ is less than or equal to $k$, otherwise, we return the last $k$ entries in sorted order of $\mathcal{S}_C$. The time required is $O(k)$.

To recap, since we do $O(k)$ work in each step of the parallel reduction, $k = O(1)$ by assumption and there are $O(\log~p)$ such steps, the overall complexity of the reduction is  $O(log~p)$. The communication cost, i.e., the amount of data exchanged in the parallel reduction is 

\begin{equation}
\left\lceil {\sum\limits_{i = 1}^{\log p} {\frac{p}{{{2^i}}}} k} \right\rceil  = (p - 1)k = O(pk) = O(p),
\end{equation}

\noi since $k = O(1)$ by assumption.

Finally, the worst case complexity of the $Pruned$ function is $O(k) = O(1)$, since this is just a linear scan in which we compare the frequency of each item against the threshold required to be a frequent item, and put an item in the $result$  summary if its frequency is greater than or equal to the required threshold.

It follows that the overall  complexity of the parallel Space Saving algorithm is $O(n/p + \log~p)$. We are now in the position to state the following Theorem:

\begin{thm}
The algorithm is cost--optimal for $k = O(1)$.
\end{thm}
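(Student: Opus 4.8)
The plan is to invoke the standard definition of cost-optimality for a parallel algorithm: a parallel algorithm is cost-optimal when its \emph{cost}, defined as the product of the number of processors $p$ and the parallel running time $T_p$, is asymptotically equal to the running time $T_1$ of the best known sequential algorithm solving the same problem. First I would recall the two ingredients already established in this Section. The sequential Space Saving algorithm solves the $k$--majority problem in $O(n)$ time (it performs a single linear scan of the $n$ input elements with $O(1)$ work per element under the assumption $k = O(1)$), so we take $T_1 = O(n)$ as the sequential baseline. The parallel complexity derived above is $T_p = O(n/p + \log p)$.

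Next I would form the cost of the parallel algorithm, namely
\begin{equation}
C_p = p \, T_p = p \cdot O\!\left(\frac{n}{p} + \log p\right) = O(n + p \log p).
\end{equation}
The task then reduces to identifying the regime in which this cost matches the sequential time $T_1 = O(n)$, i.e. in which $O(n + p\log p) = O(n)$. This holds precisely when the term $p \log p$ does not asymptotically dominate $n$, that is, when $p \log p = O(n)$, equivalently $p = O\!\left(n / \log n\right)$. In this regime $C_p = O(n) = T_1$, and the algorithm is cost-optimal by definition.

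The main step, and the only point requiring care, is making explicit the range of $p$ for which the claim is asserted: cost-optimality is not unconditional but holds for $p = O(n/\log n)$, which is exactly the range of practical interest since in applications the number of processors is far smaller than the data size $n$. I would therefore structure the proof as (i) state $T_1 = O(n)$ and $T_p = O(n/p + \log p)$, both already justified; (ii) compute $C_p = p T_p = O(n + p\log p)$; and (iii) observe that for $p = O(n/\log n)$ one has $p \log p = O(n)$, whence $C_p = O(n) = T_1$, establishing cost-optimality. The assumption $k = O(1)$ is used throughout to discharge the dependence on $k$ in both the sequential scan cost and the per-step reduction cost, so I would flag that it is already in force from the opening of the Section.
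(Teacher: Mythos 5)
Your proposal is correct and follows essentially the same route as the paper's proof: both take $T_1 = O(n)$ and $T_p = O(n/p + \log p)$ (with $k = O(1)$), form the cost $pT_p$, and conclude cost-optimality in the regime where the $p\log p$ term is dominated by $n$, which the paper states as $n = \Omega(p\log p)$ and you state equivalently as $p = O(n/\log n)$. Your version merely makes the conditional nature of the claim slightly more explicit, but there is no substantive difference in the argument.
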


\begin{proof}
Cost--optimality requires by definition that asymptotically $pT_p = T_1$ where $T_1$  represents the time spent on one processor (sequential time) and $T_p$ the time spent on
$p$ processors. The sequential algorithm requires $O(n)$ in the worst case, and the parallel complexity of our algorithm is $O(n/p + \log~p)$ when $k = O(1)$. It follows from
the definition that the algorithm is cost--optimal for $n = \Omega(p \log p)$.
\end{proof}

Cost--optimality is an important theoretical property of parallel algorithms, since it implies linear speedup (equal to $p$) and efficiency equal to 1. Moreover,  cost--optimality also implies good scalability of the algorithm when using smaller sized parallel machines equipped with a limited number of processors. Indeed, scaling down a cost--optimal algorithm on a reduced number of processors will result in a fast algorithm, while scaling down a non cost--optimal algorithm may even result in a parallel algorithm doing more work and being slower than the corresponding best sequential algorithm. 

We proceed with the analysis of isoefficiency and scalability. The sequential algorithm has complexity $O(n)$; the parallel overhead is $T_o = pT_p - T_1$. In our case,  $T_o =
p(n/p + \log~p) - n =  p\log~p$.  The isoefficiency relation \cite{Grama93} is then $n \geq  p~\log~p$. Finally, we derive the scalability function of this parallel system
\cite{Quinn03}.

This function shows how memory usage per processor must grow to maintain efficiency at a desired level. If the isoefficiency relation is $n \geq f(p)$ and $M(n)$ denotes the
amount of memory required for a problem of size $n$, then $M (f (p))/p$ shows how memory usage per processor must increase to maintain the same level of efficiency. Indeed, in
order to maintain efficiency when increasing $p$, we must increase $n$ as well, but on parallel computers the maximum problem size is limited by the available memory, which is
linear in $p$. Therefore, when the scalability function $M (f (p))/p$ is a constant $C$, the parallel algorithm is perfectly scalable; $Cp$ represents instead the limit for
scalable algorithms. Beyond this point an algorithm is not scalable (from this point of view).
In our case the function describing how much memory is used for a problem of size $n$ is given by $M (n) = n$. Therefore, $M (f (p))/p = O(\log~p)$ with $f(p)$ given by the
isoefficiency relation.

\section{Conclusions}
\label{conclusions}
%The conclusion goes here.

To the best of our knowledge, we have designed and implemented the first message-passing based parallel version of the Space Saving algorithm to solve the $k$--majority problem. In particular, we have shown that our algorithm retains all of the key features of the sequential Space Saving algorithm. Besides proving its formal correctness, we have applied our algorithm to the detection of frequent items in real datasets and in synthetic datasets whose probability distribution functions are a Hurwitz and a Zipf distribution respectively. Extensive experimental results on both synthetic and real datasets have been reported and discussed in Appendix, clearly showing that our algorithm outperforms the parallel version of the algorithm designed by Agarwal et al. with regard to precision, total error and average relative error, while providing overall comparable parallel performances with linear speedup.

\section*{Acknowledgment}
We are indebted to the unknown referees for enlightening observations, which helped us to improve the paper. The authors would also like to thank G. Cormode and M. Hadjieleftheriou for making freely available their sequential implementation of the Space Saving algorithm. We are also grateful to Prof. Palpanas of Paris Descartes University for providing us with the real datasets used in the experiments. The research of M. Cafaro has been supported by CMCC, Italy, under the grant FISR Gemina project, Italian Ministry of Education, University and Research. The research of P. Tempesta has been supported by the grant FIS2011--22566, Ministerio de Ciencia e Innovaci\'on, Spain.

\appendix
\section{Experimental results}
\label{appendix}

We report here the experimental results we have obtained running the parallel Space Saving algorithm on an IBM iDataPlex cluster. Each SMP node is configured with two 2.6 Ghz octa-core Xeon Sandy Bridge E5-2670 CPUs  with 20 MB level 3 cache and 64 GB of main memory. The interconnection network is  Infiniband 4x FDR-10 (Fourteen Data Rate) 40 Gbp/s, which provides 5 GB/s unidirectional bandwidth. Our parallel implementation, developed in C++ using MPI,  is based on the sequential source code for the \emph{Space Saving} algorithm developed in \cite{Cormode-code}.

In order to assess the merits of our parallel algorithm, we also compare it with a second parallel algorithm which we have designed and implemented starting from a sequential algorithm by Agarwal et al \cite{Agarwal}. The authors designed their algorithm for merging Frequent summaries, and then proved that for Space Saving summaries subtracting the minima from their respective summaries (if a summary possesses $k$ counters) makes them isomorphic to Frequent summaries, so that their algorithm can be reused (see Lemma 2 in \cite{Agarwal}).

In the \emph{ParallelAgarwal} algorithm (see Algorithm \ref{pa}) each processor starts by executing the Space Saving algorithm on its local sub-array. Then, just before engaging in the parallel reduction, if the \emph{local} summary holds $k$ nonzero counters, the minimum frequency, which is stored in the first counter \emph{local}[1], is subtracted from each counter. It follows that the \emph{local} summary stores at most $k-1$ counters, so that the algorithm by Agarwal et al. shown as the \emph{AgarwalParallelReduction} (see Algorithm \ref{par}), can be applied. The input of the parallel reduction is an hash table, storing the entries in $local$ sorted by counters' frequency.

\begin{algorithm}
\begin{algorithmic}[1]
\Require $\mathcal{N}$, an array; $n$, the length of $\mathcal{N}$; $p$, the number of processors; $k$, the $k$-majority parameter
\Ensure an hash table containing $k$--majority candidate elements
\Procedure {ParallelAgarwal}{$\mathcal{N},  n, p, k$}
\Comment{The $n$ elements of the input array $\mathcal{N}$ are distributed to the $p$ processors so that each one is responsible for either $\left\lfloor {n/p} \right\rfloor$ or
$\left\lceil {n/p} \right\rceil$ elements; let $left$ and $right$ be respectively the indices of the first and last element of the sub-array handled by the process with rank
$id$; ranks are numbered from 0 to $p-1$}
\State $left \leftarrow \left\lfloor {(id-1)~n/p} \right\rfloor$
\State $right \leftarrow \left\lfloor {id~n/p} \right\rfloor  - 1$
\State $local \leftarrow \Call{SpaceSaving}{\mathcal{N}, left, right}$
\Comment{determine local candidates}

	\If{$local.nz == k$} \Comment $local.nz$ is the number of items in the stream summary $local$ with nonzero frequency
		\State $m_1 \leftarrow \mathop {local[1].\hat{f}}$
		\For{$i=1$ to $k$}
			\State $\mathop {local[i].\hat{f}} \leftarrow \mathop {local[i].\hat{f}} - m_1$
		\EndFor
	\EndIf
\State let $hash$ be an hash table storing $<item, counter>$ pairs in $local$
\State sort $hash$ by counters' frequency in ascending order
\State $global \leftarrow \Call{AgarwalParallelReduction}{hash, k}$
\Comment{determine the global candidates for the whole array}
\If{$id == 0$} \Comment{we assume here that the processor with rank 0 contains the final result of the parallel reduction}
	\State \Return $global$
\EndIf
\EndProcedure
\caption{Parallel algorithm by Agarwal et al.}
\label{pa}
\end{algorithmic}
\end{algorithm}

Although the algorithm is presented in the context of merging two summaries, it can actually be used in parallel as a reduction operator, owing to the fact that the authors also proved a bound on the output error, which is within the error affecting the input summaries.

\begin{algorithm}
\begin{algorithmic}[1]
\Require $\mathcal{S}_1$, $\mathcal{S}_2$: hash tables; $k$, $k$-majority parameter (the number of counters is at most $k-1$);
\Ensure an hash table containing $k$--majority candidate elements
\Procedure{AgarwalParallelReduction}{$\mathcal{S}_1,  \mathcal{S}_2, k$} \Comment{a merged summary of $\mathcal{S}_1$ and $\mathcal{S}_2$}
\State $\mathcal{S} \leftarrow \Call{agarwal-combine}{\mathcal{S}_1, \mathcal{S}_2};$
\State sort $\mathcal{S}$ by counters' frequency in ascending order
\If{$\mathcal{S}.nz \leq k-1$}
	\State \Return $\mathcal{S}$;
\Else \Comment{prune counters in $\mathcal{S}$}

	\State $excess \leftarrow \mathcal{S}.nz - k + 1$
	\Comment{determine frequency to be subtracted}
	\State $entry = \mathcal{S}[excess]$
		\State $counter \leftarrow entry.val$	
		\State $freq \leftarrow counter.\hat{f}$
	\Comment{subtract this frequency from the last $k-1$ counters}
	\For{$i=excess+1$ to $\mathcal{S}.nz$}
		\State $entry = \mathcal{S}[i]$
		\State $item \leftarrow entry.key$
		\State $counter \leftarrow entry.val$	
		\State $frequency \leftarrow counter.\hat{f}$
		\State $\mathcal{S}.Update(item, frequency - freq)$
	\EndFor
	\State remove first $excess$ items from $\mathcal{S}$
	\State \Return $\mathcal{S}$;
\EndIf
\EndProcedure
\caption{Parallel Reduction by Agarwal et al.}
\label{par}
\end{algorithmic}
\end{algorithm}

The parallel reduction works as follows. It starts combining the two data sets, by calling the \textit{AGARWAL-COMBINE} function. Let $\mathcal{S}$ be the combined summary. Scanning the first hash table, for each item in $\mathcal{S}_1$ the function checks if the item also appears in $\mathcal{S}_2$. In this case, it inserts the entry for the item in $\mathcal{S}$, storing as its estimated frequency the sum of the item's frequency and the frequency of the corresponding item in $\mathcal{S}_2$, and removes the item from $\mathcal{S}_2$. Otherwise, the function inserts the entry for the item storing as its estimated frequency its frequency in $\mathcal{S}_1$. 

The function then scans the second hash table. Since each time an item in $\mathcal{S}_1$ was also present in $\mathcal{S}_2$ it was removed from $\mathcal{S}_2$, now $\mathcal{S}_2$ contains only items that do not appear in $\mathcal{S}_1$. For each item in $\mathcal{S}_2$ it simply inserts the item in $\mathcal{S}$ and in the corresponding counter it stores as estimated frequency its frequency in $\mathcal{S}_2$. 

 This could entail the use of up to $2k - 2$ counters in the worst case, when $\mathcal{S}_1$ and $\mathcal{S}_2$ share no item. Let $\mathcal{S}.nz$ be the number of counters in $\mathcal{S}$. The entries in $\mathcal{S}$ are sorted by the counters' frequency in ascending order, and, if $\mathcal{S}.nz \leq k-1$ the algorithm returns $\mathcal{S}$. Otherwise, a pruning operation is required. The combine step can be performed with a constant number of sorts and scans of summaries of size $O(k)$. Then, the algorithm subtracts from the last $k-1$ counters the frequency of the $(S.nz-k+1)$--th counter, removes the first $S.nz-k+1$ counters and returns the remaining $k - 1$ counters, whose frequency has been corrected. The algorithm requires in the worst case time linear in the total number of counters, i.e., $O(k)$ if implemented as described in \cite{Agarwal} using an hash table. 

In the experiments, we tested our algorithm against the one from Agarwal et al. on both synthetic and real datasets. Regarding synthetic datasets, the input distributions used in our experiments are the Riemann--Hurwitz distribution (Hurwitz for short), and its particular case, the Zipf distribution, which is one of the most used in experiments related to sequential algorithms for frequent items. We recall that the Zipf distribution has associated the probability density function (p.d.f.)

\begin{equation}
P_{Z}(x) = \frac{x^{-(\rho +1)}}{\zeta(\rho +1)}  \quad   x \geq 1,
\end{equation}

\noindent where $\rho$ is a positive real parameter controlling the skewness of the distribution and

\begin{equation}
\zeta(s)=\sum_{k=1}^{\infty} \frac{1}{k^{s}} , \qquad \text{Re s}>1
\end{equation}

\noindent is the Riemann zeta function \cite{iwaniec-kowalski}. The Hurwitz distribution has p.d.f.

\begin{equation}
\label{hd}
P_{H}(x, a) = \frac{x^{-(\rho +1)}}{\zeta_H(\rho +1,a)}  \quad   x \geq 1,
\end{equation}

\noindent where

\begin{equation}
\zeta_H(s, q)=\sum_{k=1}^{\infty} \frac{1}{(k+q)^{s}} , \qquad \text{Re s}>1, \quad \text{Re q} >0.
\end{equation}

\noindent is the Riemann--Hurwitz zeta function. Both functions play a crucial role in analytic number theory \cite{iwaniec-kowalski} \cite{Tempesta}. 

The real datasets we used come from different domains \cite{Dallachiesa}. All of the datasets are publicly available, and two of them (Kosarak and Retail) have been widely used and reported in the data mining literature. Overall, the four datasets are characterized by a diversity of statistical characteristics, which we report in Table \ref{data}. 

\textbf{Kosarak:} This is a click-stream dataset of a Hungarian online news portal. It has been anonymized, and consists of transactions, each of which is comprised of several integer items. In the experiments, we have considered every single item in serial order.

\textbf{Retail:} This dataset contains retail market basket data coming from an anonymous Belgian store. Again, we consider all of the items belonging to the dataset in serial order.

\textbf{Q148:} Derived from the KDD Cup 2000 data, compliments of Blue Martini, this dataset contains several data. The ones we use for our experiments are the values of the attribute “Request Processing Time Sum” (attribute number 148), coming from the ``clicks'' dataset. A pre-processing step was required, in order to obtain the final dataset. We had to replace all of the missing values (appearing as question marks) with the value of 0.

\textbf{Nasa:} Compliments of NASA and the Voyager 2 Triaxial Fluxgate Magnetometer principal investigator, Dr. Norman F. Ness, this dataset contains several data. We selected the “Field Magnitude (F1)” and “Field Modulus (F2)” attributes from the Voyager 2 spacecraft Hourly Average Interplanetary Magnetic Field Data. A pre-processing step was required for this dataset: having selected the data for the years 1977-2004, we removed the unknown values (marked as 999), and multiplied all values by 1000 to convert them to integers (since the original values were real numbers with precision of 3 decimal points). The values of the two attributes were finally concatenated. In our experiments, we read all of the values of the attribute “F1”, followed by all of the values of the attribute “F2”.

\begin{table}
\renewcommand{\arraystretch}{1.3}
 \caption{Statistical characteristics of the real datasets}
      \label{data}
	\centering
    \begin{tabular}{|c |  c |  c  | c | c |}
    \hline
      & Kosarak & Retail & Q148 & Nasa  \\ \hline
      \hline
    Count &  8019015 & 908576 & 234954 & 284170 \\ \hline
    Distinct items & 41270 & 16470 & 11824 & 2116 \\ \hline
    Min & 1 & 0 & 0 &  0  \\ \hline 
    Max & 41270 & 16469 & 149464496 & 28474 \\ \hline 
    Mean & 2387.2 & 3264.7 & 3392.9 & 353.9  \\ \hline
    Median & 640 & 1564 & 63 & 120 \\ \hline
    Std. deviation & 4308.5 & 4093.2 & 309782.5 & 778.1 \\ \hline
    Skewness & 3.5 & 1.5 & 478.1 & 6.5  \\ \hline
    \end{tabular}
    \end{table}

Denoting with $f$  the true frequency of an item and with $\hat{f}$ the corresponding frequency reported by an algorithm, then the absolute error is, by definition, the difference $\left|  f - \hat{f} \right|$. The (absolute) total error is then defined as the sum of the absolute errors related to the items reported by an algorithm. Similarly, the absolute relative error is defined as $\Delta f = \frac{{\left| {f - \hat{f}} \right|}}{f}$, and the average relative error is derived by averaging the absolute relative errors over all of the measured frequencies.

Precision is defined as the total number of true $k$-majority elements reported over the total number of items reported. Therefore, this metric quantifies the number of false positives reported by an algorithm in the output data summary. Recall is instead the total number of true $k$-majority elements reported over the number of true $k$-majority elements given by an exact algorithm. It follows that an algorithm is correct if an only if its recall is equal to 1 (or 100\%); both algorithms under test have already been proved to be formally correct and their recall in all of the tests is indeed equal to 1.

\subsection{Real datasets: error}

In this Section, we report the experimental results obtained on the real datasets. We do not report on the performances, owing to the fact that processing the largest dataset on a single processor requires just a few milliseconds. Since the datasets are real, the only parameter we can vary is $k$. In the following tests, $k$ has been varied from 100 to 1000, in steps of 100 (owing to the statistical characteristics of the real datasets). 

We report the total error, the precision and the average relative error (denoted from now on as ARE). 

Figure \ref{tekn} presents the results related to the total error for the Kosarak and Nasa datasets, whilst Figure \ref{teqr} is related to the Q148 and Retail datasets. Note that we use a logarithmic scale for the total error values, since some of the curves would otherwise be too close to distinguish them. As shown, our algorithm outperforms the Agarwal et al. algorithm for all of the datasets under test, with very low and close to zero total error for both the Kosarak and Q148 datasets. The values for both the Nasa and the Retail datasets are about an order of magnitude smaller than the corresponding values obtained by Agarwal et al. 

Regarding the precision, the results in Figures \ref{preckn} (Kosarak and Nasa datasets) and \ref{precqr} (Q148 and Retail datasets) are also clear evidence of the superiority of our algorithm. As shown, the algorithm by Agarwal et al. achieves a precision almost equal to zero for all of the datasets under test. Our algorithm exhibits a precision close to one for both Kosarak and Q148. For the Nasa dataset, the precision is between 0.55 and 0.85 for $k$ in the range [100 - 300], and steadily increases towards 1.0 for $k$ in the range [400 - 1000]. Similarly, for the Retail dataset, the precision is between 0.5 and 1.0 for $k$ in the range [100 - 400], and steadily increases towards 0.7 for $k$ in the range [500 - 1000].

Finally, Figures \ref{arekn} (Kosarak and Nasa datasets) and \ref{areqr} (Q148 and Retail datasets) are related to the average relative error, with our algorithm clearly outperforming the other. Our algorithm exhibits ARE values close to zero for both Kosarak and Q148. For the Nasa dataset, our algorithm's ARE values are steadily decreasing from 0.2 to 0. The same behavior is observed for the Retail dataset, where our algorithm exhibits ARE close to zero for $k = 100$, close to one for $k = 200$, equal to 0.5 for $k = 300$ and then steadily decreasing ARE values from 0.62 to 0.26 in the range [300 - 1000].

\begin{figure*}[hbt]
  \centering
  \begin{tabular}{ c c }
	\subfloat[Kosarak and Nasa datasets]{
           \includegraphics[scale=0.5]{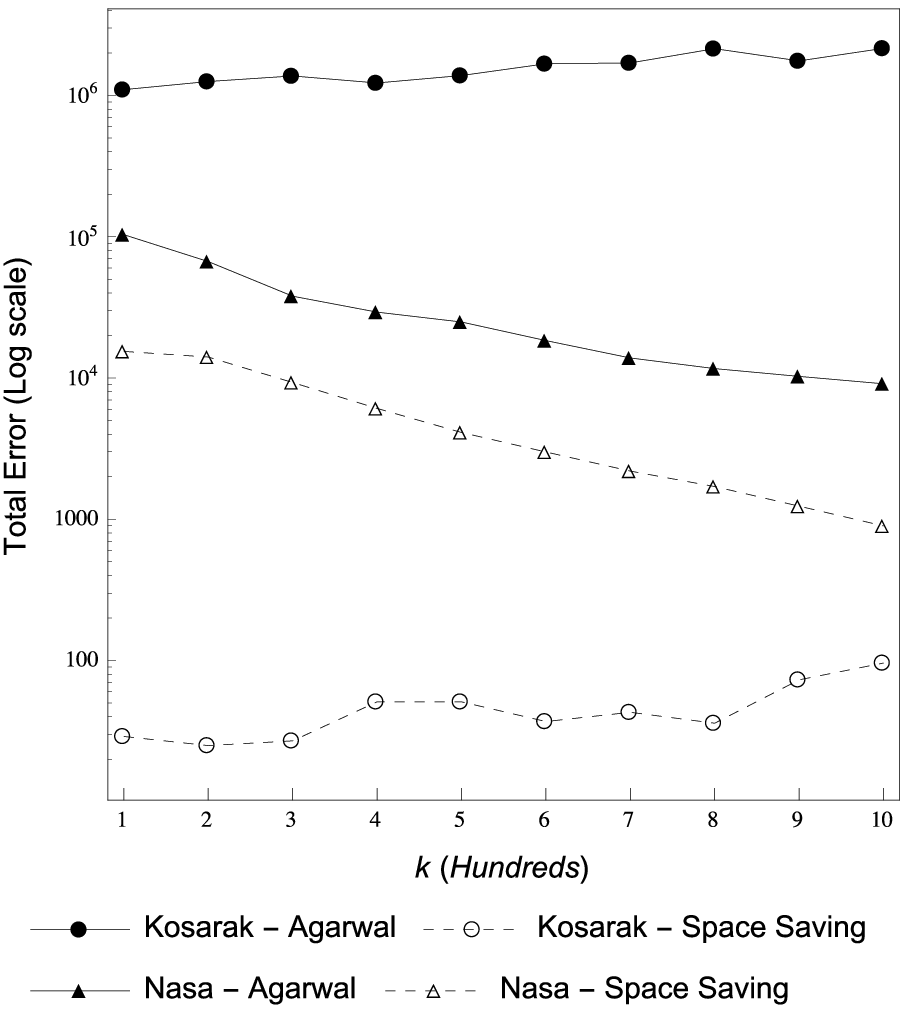}
           \label{tekn}
    } & 
           
    \subfloat[Q148 and Retail datasets]{
          \includegraphics[scale=0.5]{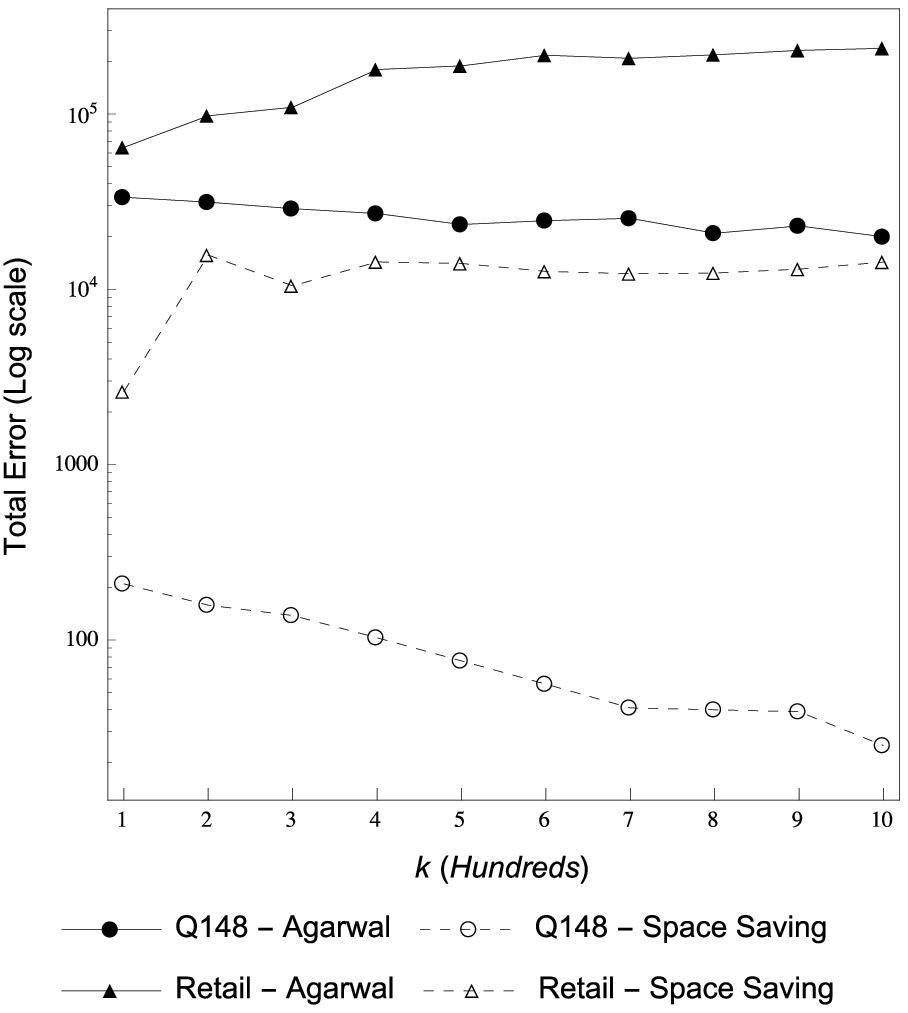}
          \label{teqr}
    } 
  	
\end{tabular}
           
 \caption{Real datasets: Total Error varying $k$ on $p = 8$ cores}
 \label{real-te}
 
\end{figure*}

\begin{figure*}[h!]
  \centering
  \begin{tabular}{ c c }
		
	\subfloat[Kosarak and Nasa datasets]{
           \includegraphics[scale=0.5]{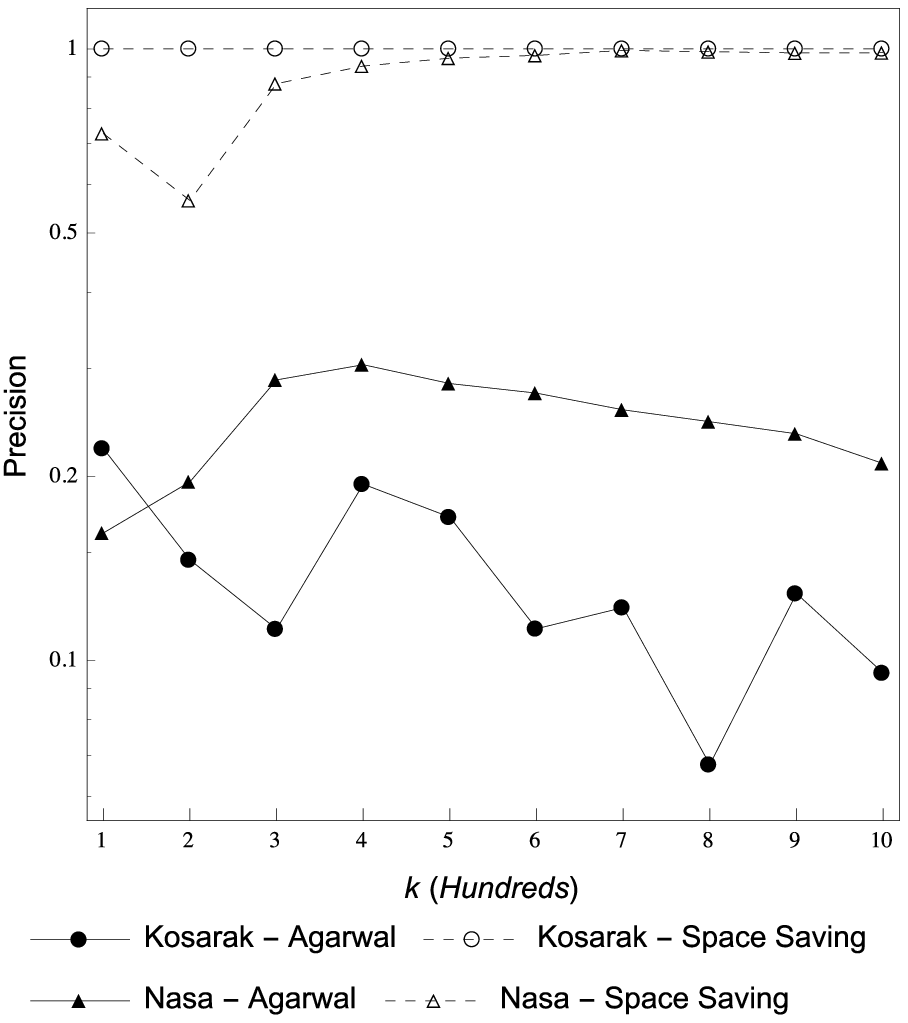}
           \label{preckn}
    } & 
           
    \subfloat[Q148 and Retail datasets]{
          \includegraphics[scale=0.5]{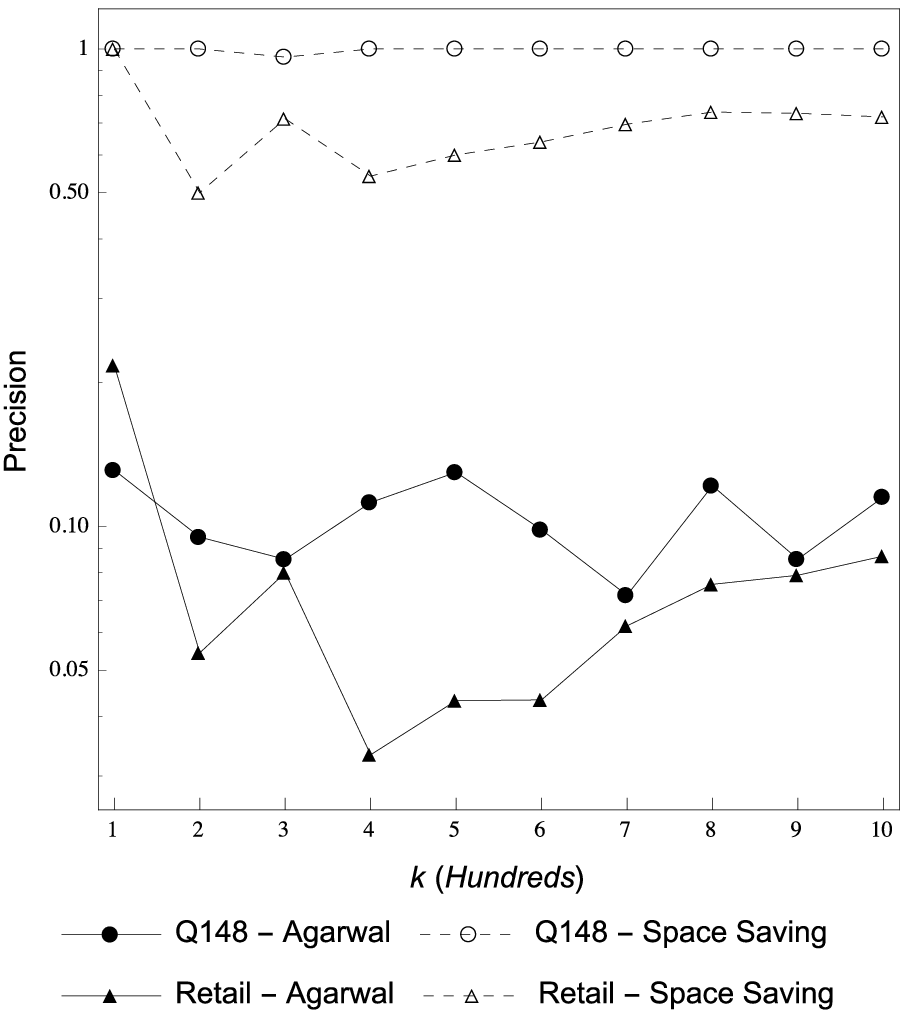}
          \label{precqr}
    }

\end{tabular}
           
 \caption{Real datasets: Precision varying $k$ on $p = 8$ cores}
 \label{real-prec}
 
\end{figure*}

\begin{figure*}[h!]
  \centering
  \begin{tabular}{ c c }
	          
    \subfloat[Kosarak and Nasa datasets]{
           \includegraphics[scale=0.5]{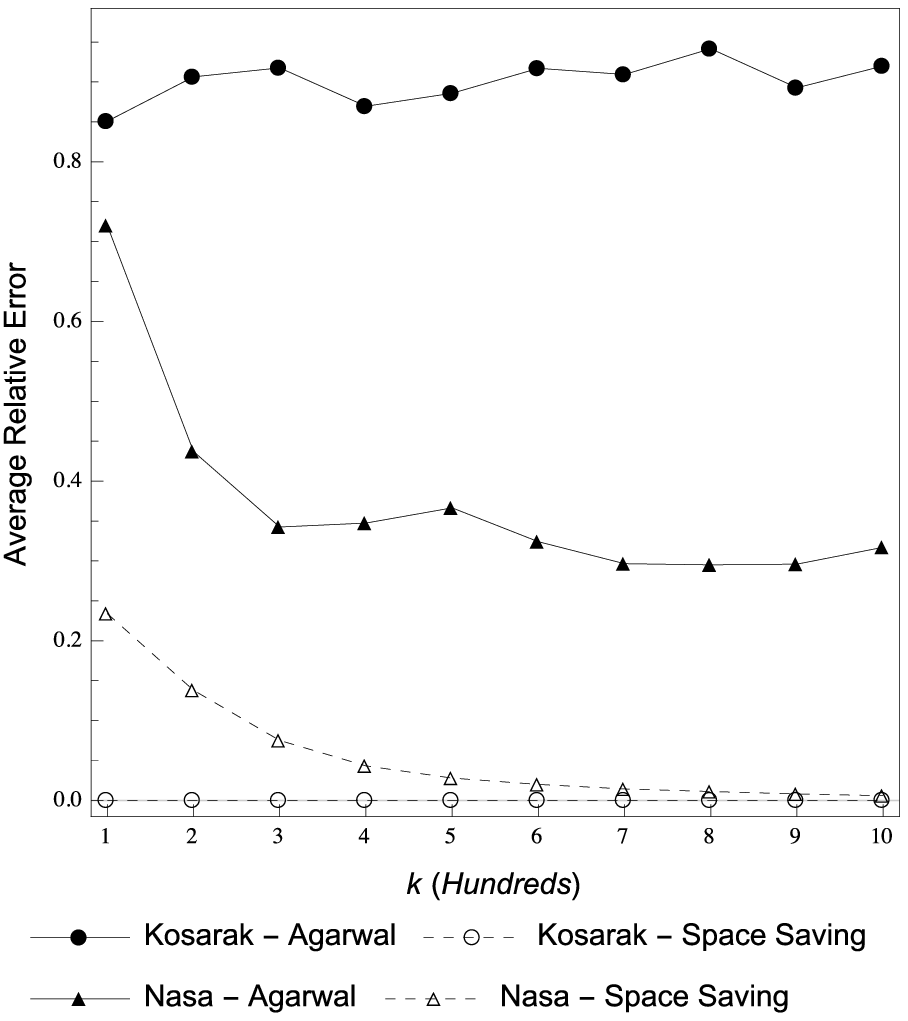}
           \label{arekn}
     } & 
           
    \subfloat[Q148 and Retail datasets]{
          \includegraphics[scale=0.5]{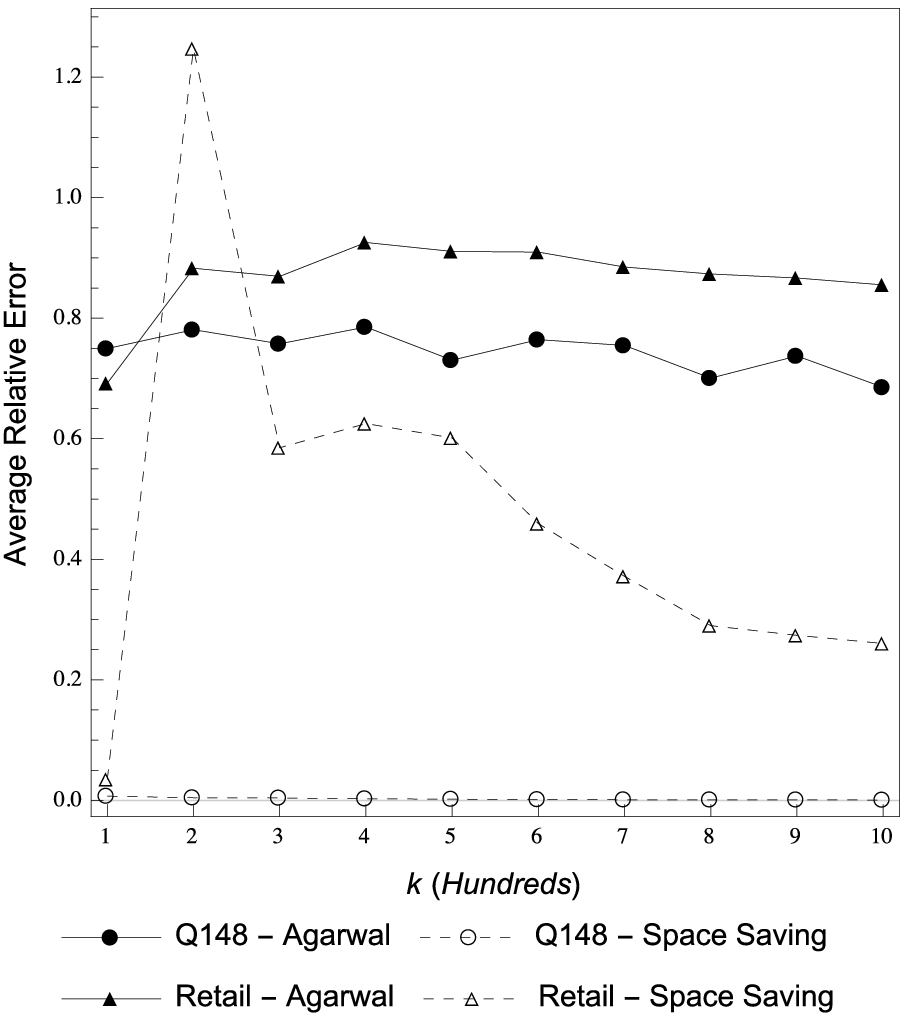}
          \label{areqr}
     }

\end{tabular}
           
 \caption{Real datasets: Average Relative Error varying $k$ on $p = 8$ cores}
 \label{real-are}
 
\end{figure*}

\subsection{Synthetic datasets: error}
\label{error}
We have carried out several experiments, with the aim of analyzing the error committed by the algorithms under test. We have fixed $a=0.5$ in all of the simulations involving the Hurwitz distribution. Indeed, for integer values of the parameter $a$, the Hurwitz distribution becomes the Zipf one (with a shifted value of the $\rho$ parameter). As usual, we report the total error, precision and ARE. 

The following experiments related to the error are characterized by the input size $n$, the parameter $k$ and the skew $\rho$ of the distribution; for each experiment we have determined the corresponding total error,  average relative error and precision. In particular, in the first experiment we fixed $n = 500,000,000$ and $\rho = 1.5$ letting $k$ vary from 1000 to 10,000 in steps of 1000. In the second experiment, $\rho = 1.5$, $k = 2000$ and $n$ varies from 100,000,000 to 1,000,000,000 in steps of 100,000,000. Finally, in the third experiment we fixed $n = 500,000,000$, $k = 2000$ and $\rho$ varies from 0.5 to 3.0 in steps of 0.5. Table \ref{error-experiments} recaps the experiments carried out. For each different value of $k$, $n$ and $\rho$ the algorithms have been run 20 times using a different seed for the pseudo-random generator associated to the distribution (using the same seeds in the corresponding executions of different algorithms). For each input distribution generated, the algorithm has been run on up to 8 cores (one core per node), and the results have been averaged for each number of cores, over all of the runs. The input elements are 32 bits unsigned integers.

We also computed for each mean the corresponding mean's 95\% confidence interval (by using the Student $t$ distribution). Even though we have determined the total error, ARE and precision for each different value of $p = 1, \dots, 8$, we only report here the results for $p = 8$ to save space, taking into account that the observed behavior did not change for $p = 2, \dots, 7$ (and, of course, the behavior for $p = 1$ was identical for both algorithms since no parallel reduction actually took place).

\begin{table}
\renewcommand{\arraystretch}{1.3}
 \caption{Design of  error experiments for Zipfian and Hurwitz distributions }
      \label{error-experiments}
	\centering
	\tiny
	    \begin{tabular}{|c|c|c|c|}
    \hline
    Experiment & $n \quad (millions)$ & $k \quad (thousands)$ & $\rho$  \\ \hline
    1 &  500 &[1, 10] in steps of 1 & 1.5 \\ \hline
    2 &  [100 , 1000] in steps of 100 & 2 & 1.5 \\ \hline
    3 &  500 & 2 & [0.5, 3.0] in steps of 0.5 \\ \hline
    \end{tabular}
    \normalsize
    \end{table}

We begin with the analysis of the total error. For Experiment 1, as shown in Figure \ref{tek}, the total error committed by our algorithm for both input distributions is practically zero for every value of $k$, whilst the total error of the algorithm by Agarwal et al. decreases when $k$ increases but still attains a very high value even for $k = 10,000$. Regarding Experiment 2, depicted in Figure \ref{ten}, again our algorithm is affected by a total error close to zero for both input distributions independently of the value of $n$. On the contrary, the total error of the algorithm by Agarwal et al. steadily increases with $n$ and is already very high even for the smallest value of $n$. In Experiment 3, for both input distributions as shown in Figure \ref{tesk}, our algorithm is affected by total error close to zero. The algorithm by Agarwal et al. on the other hand, performs well only for skew values in the set $\{2.5, 3\}$, whilst the total error explodes for values in the set $\{1, 1.5, 2\}$, attaining its maximum value for $\rho = 1$. To recap, our algorithm outperforms the other with regard to the total error in all of the experiments.

Regarding the ARE, as shown in Figures \ref{arek}, \ref{aren} and \ref{aresk}, our algorithm clearly outperforms the other algorithm for both the input distributions, with ARE values practically equal to zero for the whole set of $k$ and $n$ values under test in Experiments 1 and 2. For Experiment 3, our algorithm shows an ARE value slightly greater than zero  only for $\rho = 0.5$; however, it's worth noting here that $\rho = 0.5$ is the only case in which there are no frequent items.

Finally, we analyze the precision attained. As shown in Figures \ref{preck} and \ref{precn}, in the Experiments 1 and 2 our algorithm clearly outperforms the algorithm by Agarwal et al. for both the input distributions. We obtain precision values equal to one for the whole set of $k$ and $n$ values under test, whilst the Agarwal et al. algorithm's precision is always less than 0.1. For Experiment 3, depicted in Figure  \ref{precsk}, our algorithm provides excellent performances with precision equal to one for skew values in the set  $\{1, 1.5, 2, 2.5, 3\}$.  We note here that the precision is zero for both algorithms when $\rho = 0.5$, which is consistent with our previous observation (when discussing the ARE values) that in this case there are no frequent items. The precision obtained using the algorithm by Agarwal et al. reaches its maximum value (less than 0.2 nevertheless) for $\rho = 1$, and then steadily decreases again. Therefore, in each of the different scenarios, the precision provided by our parallel algorithm is for all of the practical purposes identical to the precision attained by the sequential Space Saving algorithm, so that our main goal when designing the algorithm has been achieved.

\begin{figure*}[h!p]
  \centering
  \begin{tabular}{ccc}
     \subfloat[Total Error varying $k$]{
           \includegraphics[scale=0.5]{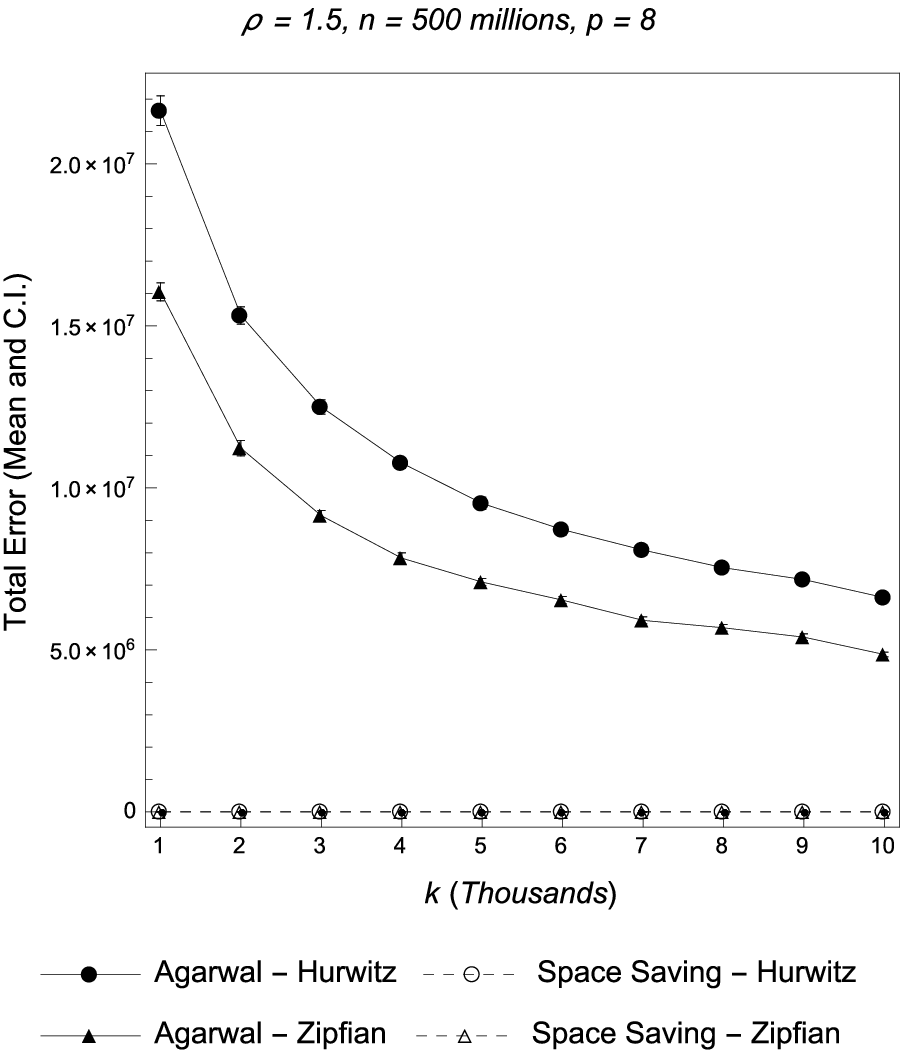}
           \label{tek}
        } &
        
      \subfloat[Precision varying $k$]{
           \includegraphics[scale=0.5]{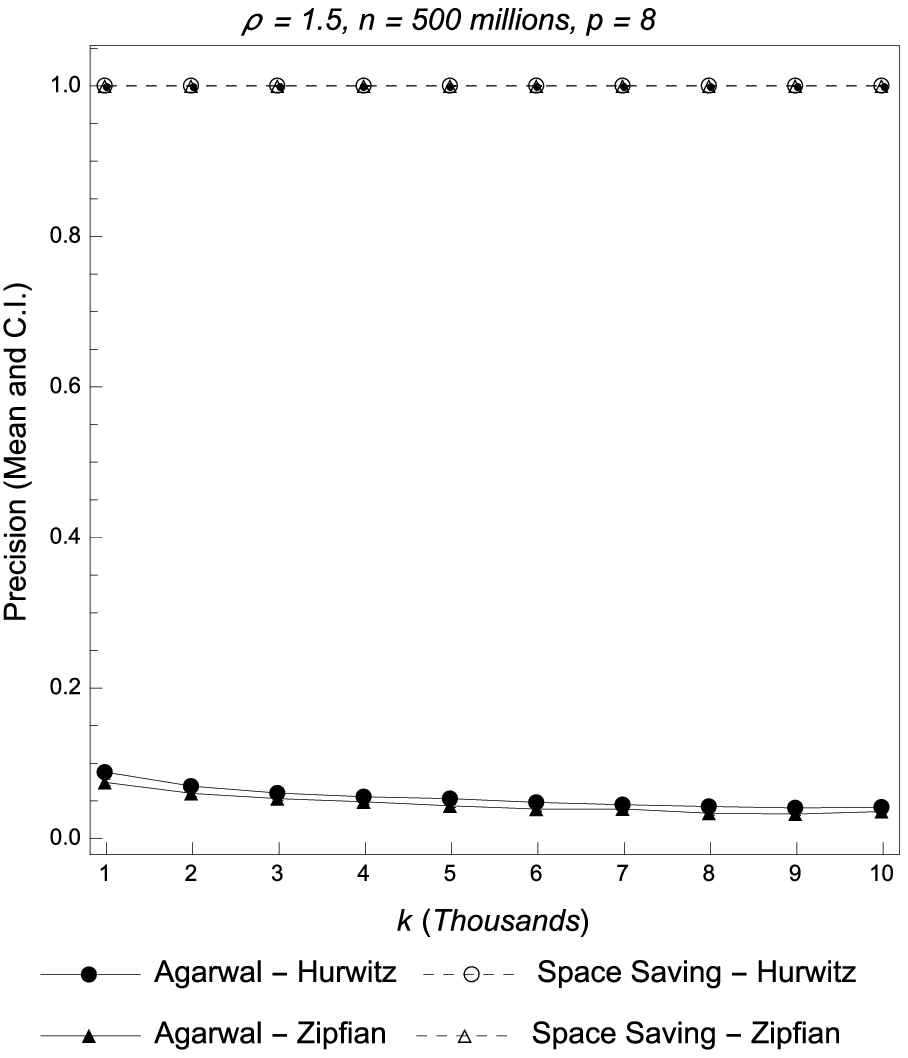}
           \label{preck}
        } &

       \subfloat[ARE varying $k$]{
           \includegraphics[scale=0.5]{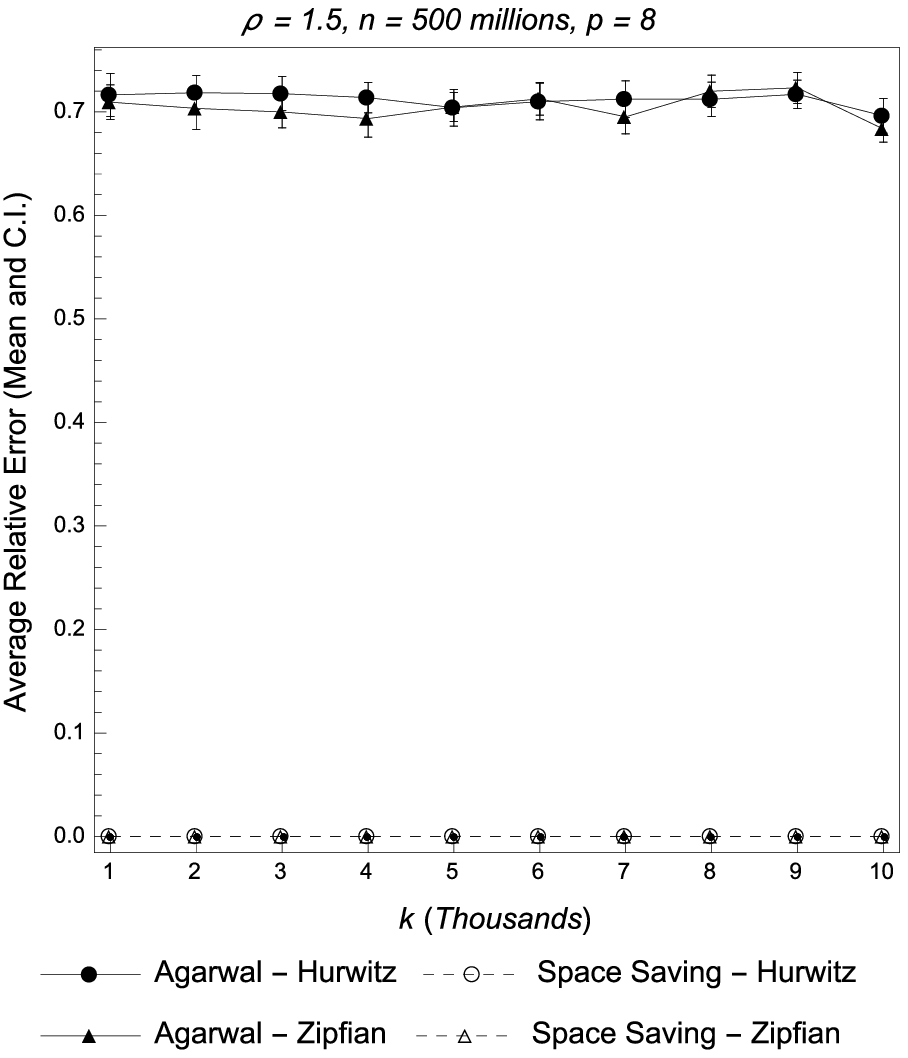}
           \label{arek}
        }

 \end{tabular}
 
 \caption{Experiment 1, Total Error, Precision and ARE varying $k$ on $p = 8$ cores} \label{exp1}
\end{figure*}

\begin{figure*}[h!p]
  \centering
  \begin{tabular}{ccc}
    
     \subfloat[Total Error varying $n$]{
          \includegraphics[scale=0.5]{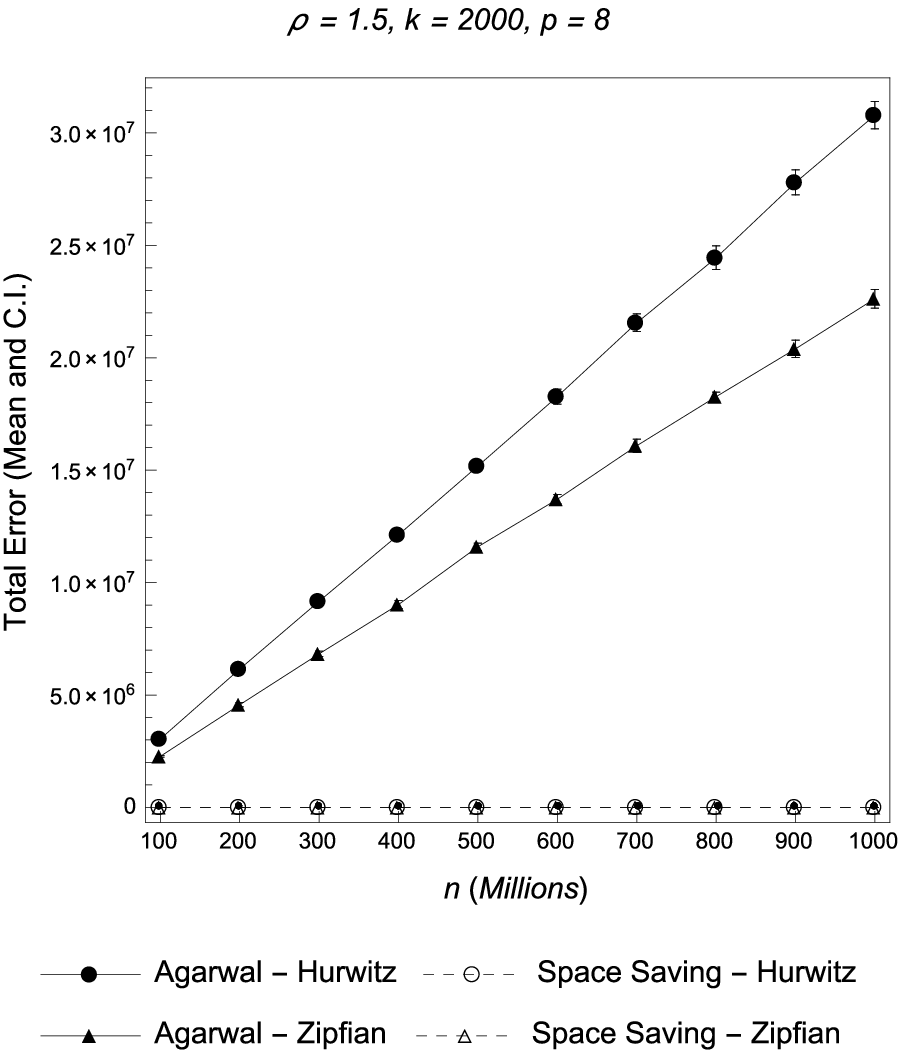}
          \label{ten}
        } &

     \subfloat[Precision varying $n$]{
          \includegraphics[scale=0.5]{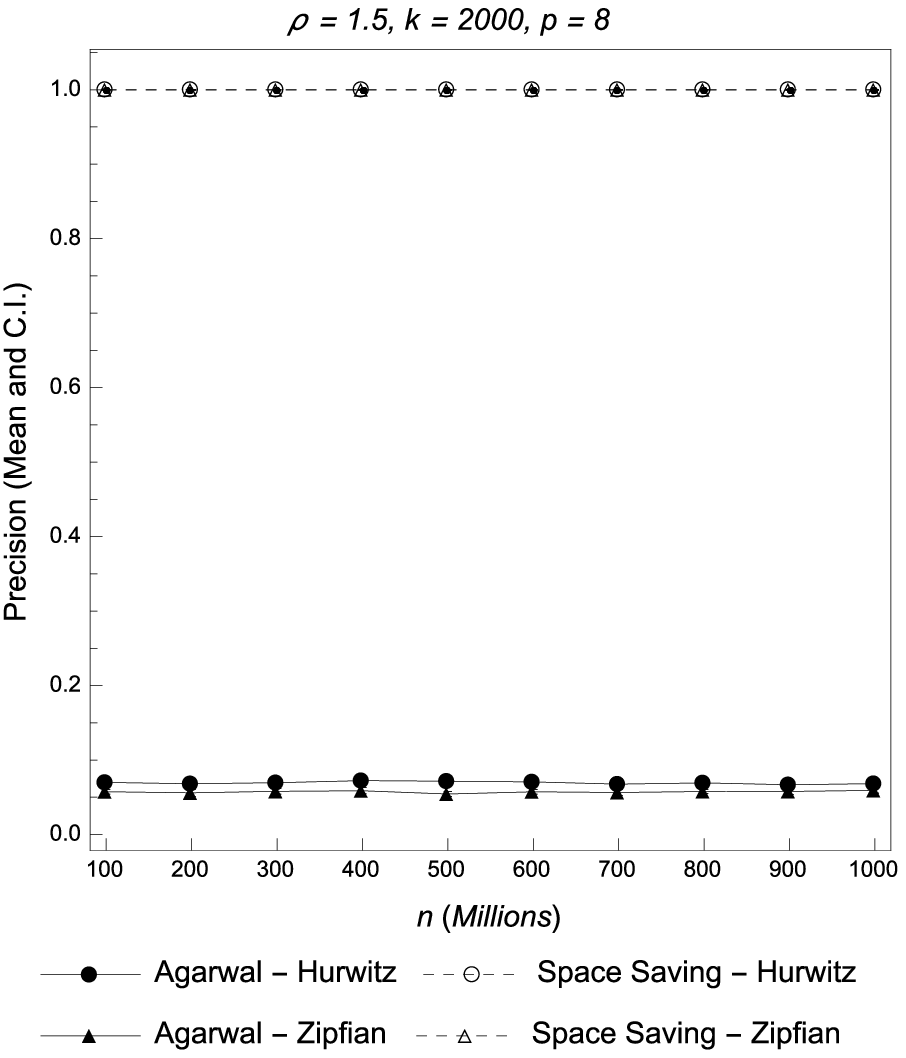}
          \label{precn}
        } &

     \subfloat[ARE varying $n$]{
          \includegraphics[scale=0.5]{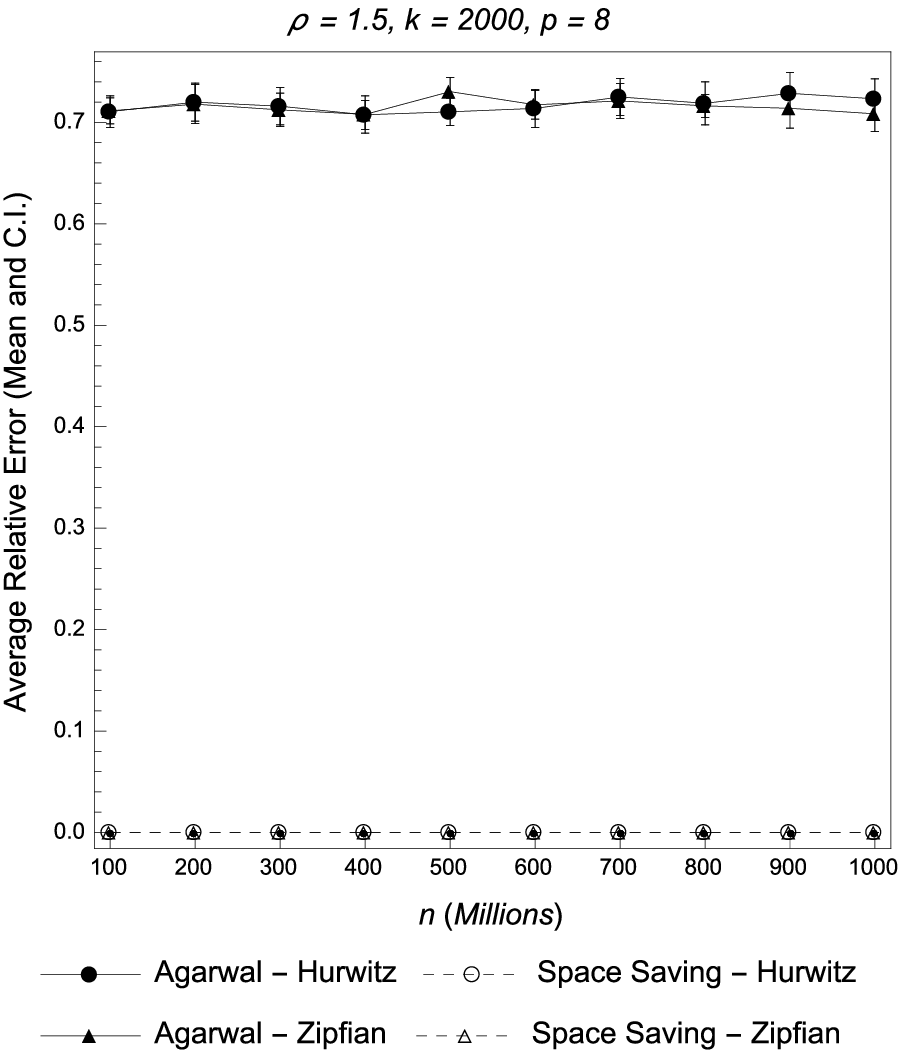}
          \label{aren}
        }
        
 \end{tabular}
 
 \caption{Experiment 2, Total Error, Precision and ARE varying $n$ on $p = 8$ cores} \label{exp2}
\end{figure*}

\begin{figure*}[h!p]
  \centering
  \begin{tabular}{ccc}
            
      \subfloat[Total Error varying $\rho$]{
          \includegraphics[scale=0.5]{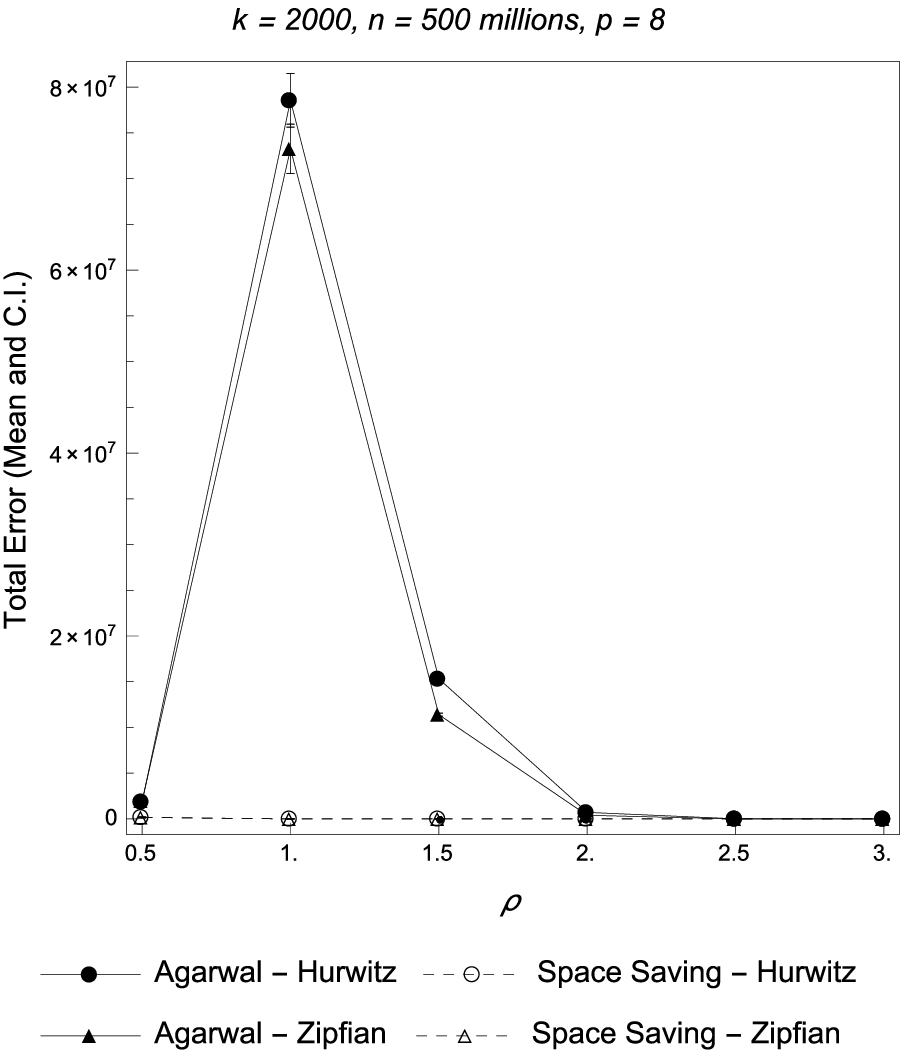}
          \label{tesk}
        } &
        
      \subfloat[Precision varying $\rho$]{
          \includegraphics[scale=0.5]{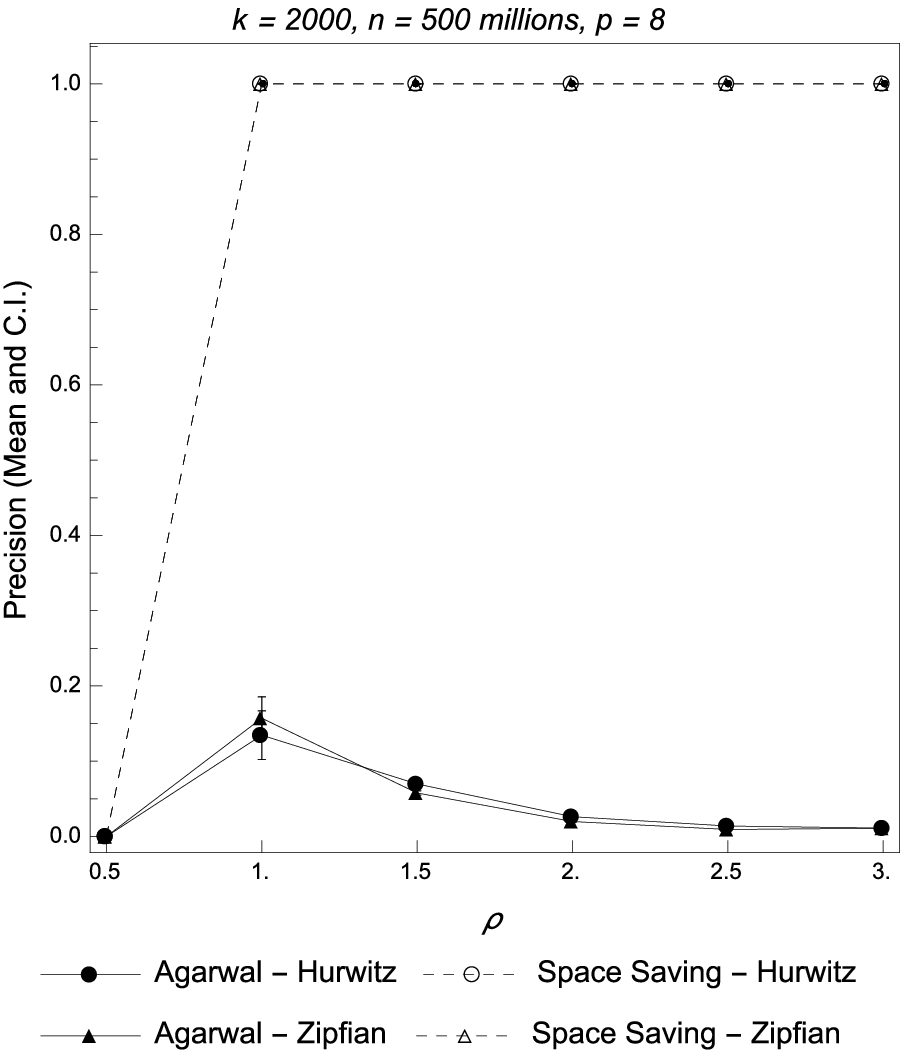}
          \label{precsk}
        } &

     \subfloat[ARE varying $\rho$]{
          \includegraphics[scale=0.5]{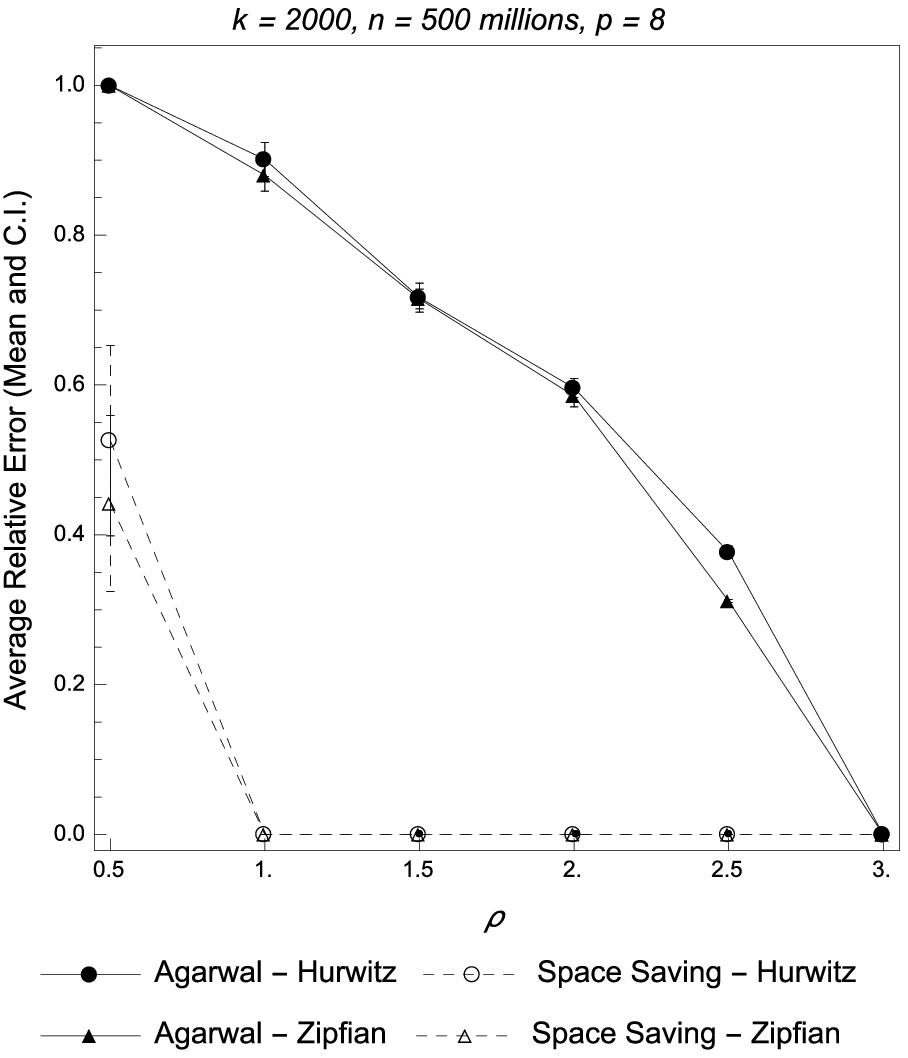}
          \label{aresk}
        }
        
 \end{tabular}
 
 \caption{Experiment 3, Total Error, Precision and ARE varying $\rho$ on $p = 8$ cores} \label{exp3}
\end{figure*}

\subsection{Synthetic datasets: performances}

We have designed and carried out some performance experiments characterized by the following parameters: the input size $n$, $k$ and the skew $\rho$. For each input distribution generated, the algorithm has been run twenty times on up to 8 cores, and the results have been averaged for each number of cores, over all of the runs.  The input elements are 32 bits unsigned integers. Table \ref{experiments} reports the values actually used in each of the performance experiments.

\begin{table}
\renewcommand{\arraystretch}{1.3}
 \caption{Design of  performance experiments for Zipfian and Hurwitz distributions }
      \label{experiments}
	\centering
    \begin{tabular}{|c |  c |  c  | c | }
    \hline
    Experiment & $n$ & $k$ & $\rho$  \\ \hline \hline
    4 &  4,000,000,000 & 2,000 & 1.5 \\ \hline
    5 &  4,000,000,000 & 3,000 & 3.0 \\ \hline
    \end{tabular}
    \end{table}

 As shown in Table \ref{experiments}, we have fixed $n$ to 4 billions of input items and, in each experiment, we vary the values of $k$ and $\rho$. Figures \ref{exp4} and \ref{exp5}, related to Experiments 4 and 5, respectively, show the performances for both the Zipfian and Hurwitz distributions with regard to running time, speedup and efficiency. Similar results were obtained with other settings.
 
 \begin{figure*}[h!p]
  \centering
  \begin{tabular}{ c c }
     \subfloat[Zipfian]{
           \includegraphics[scale=0.6]{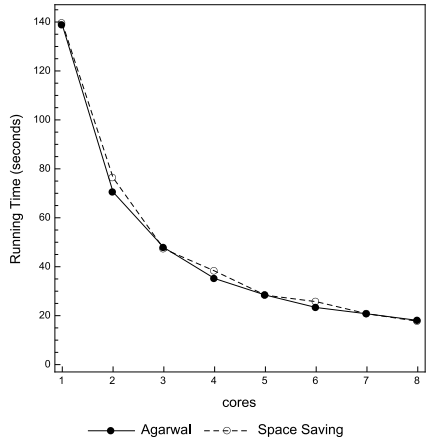}
           \label{z11r}
        } &
        
     \subfloat[Hurwitz]{
          \includegraphics[scale=0.6]{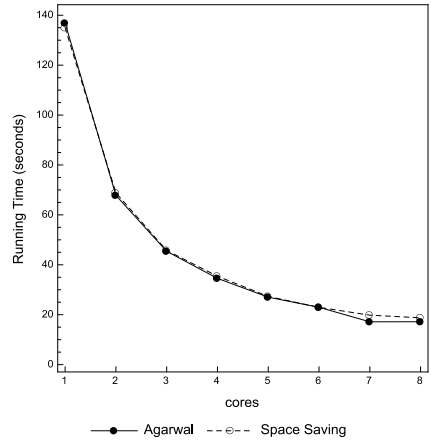}
          \label{h11r}
        } \\

     \subfloat[Zipfian]{

           \includegraphics[scale=0.6]{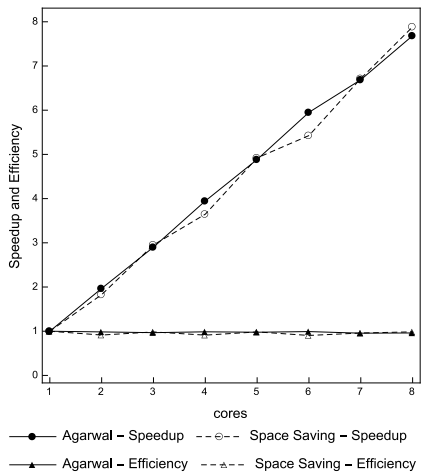}
           \label{z11se}
        } &

     \subfloat[Hurwitz]{
       % \fbox{
          \includegraphics[scale=0.6]{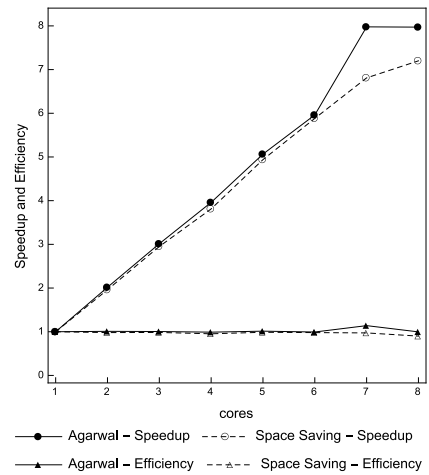}
          \label{h11se}
        }

\end{tabular}

 \caption{Experiment 4: Running Time, Speedup and Efficiency} \label{exp4}
\end{figure*}

\begin{figure*}[h!p]
  \centering
  \begin{tabular}{ c c }
     \subfloat[Zipfian]{
           \includegraphics[scale=0.6]{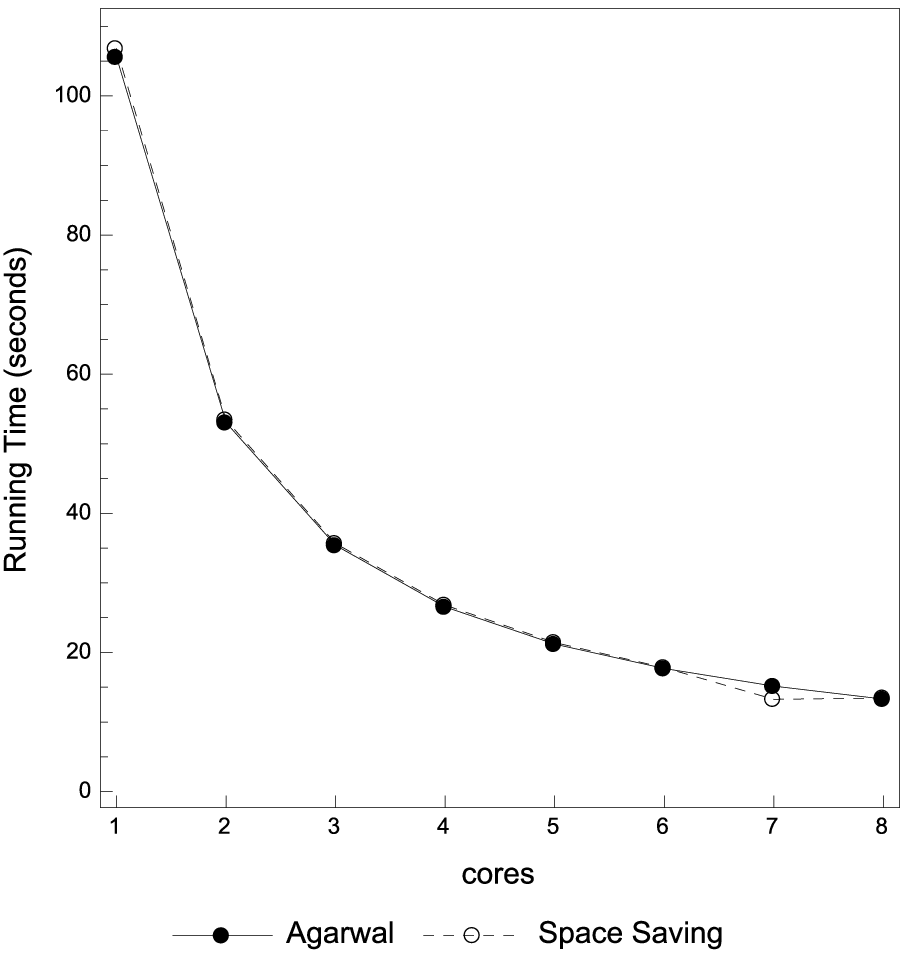}
           \label{z22r}
        } &
      
     \subfloat[Hurwitz]{
          \includegraphics[scale=0.6]{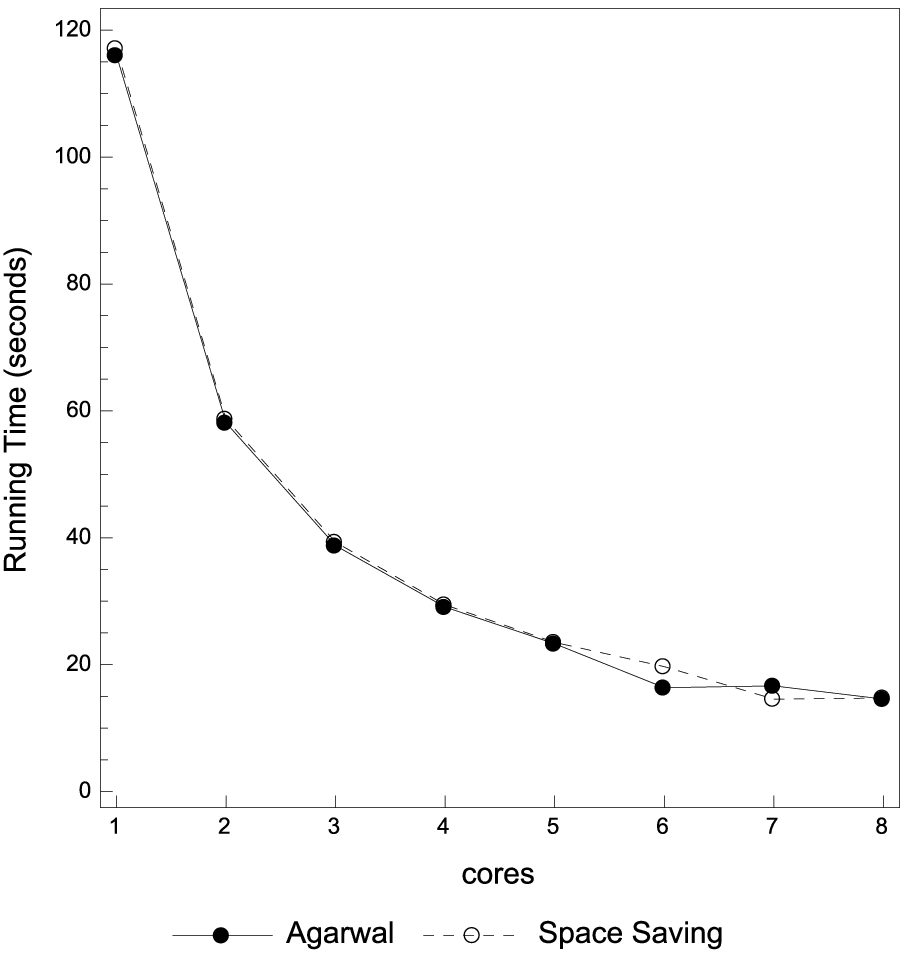}
          \label{h22r}
        } \\

     \subfloat[Zipfian]{
           \includegraphics[scale=0.6]{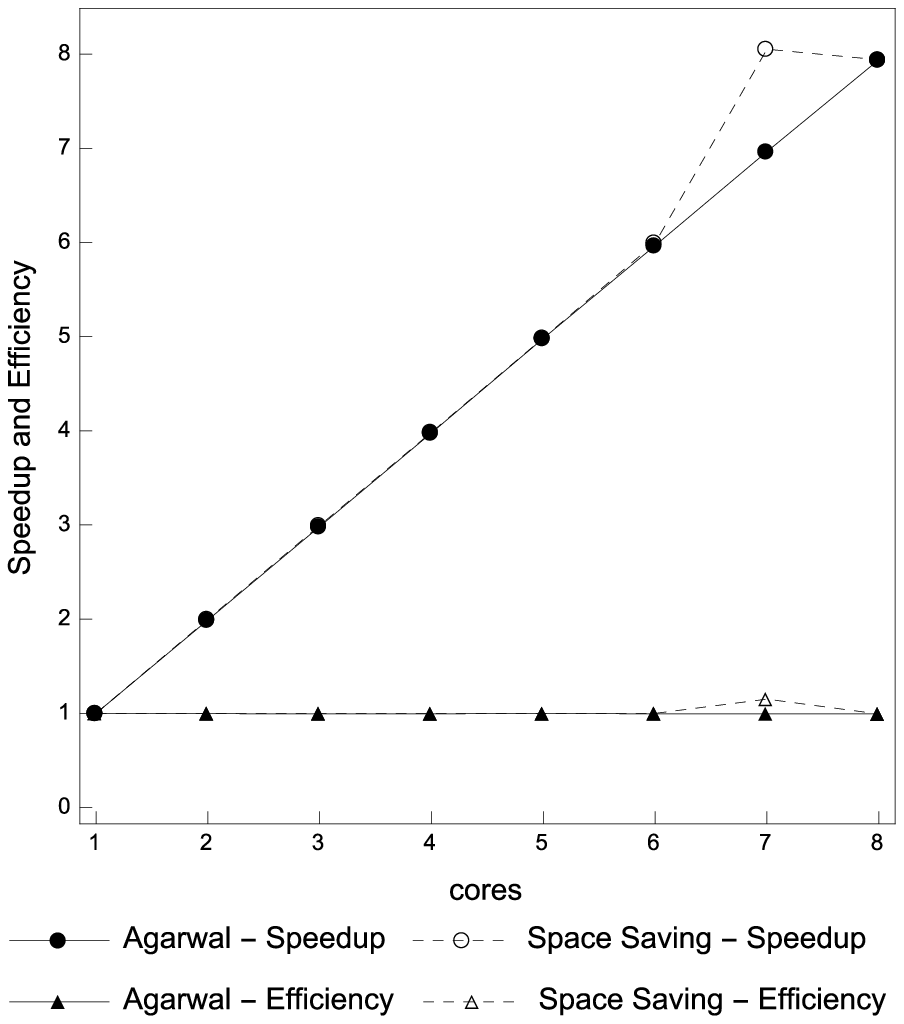}
           \label{z22se}
        } &
    
     \subfloat[Hurwitz]{
          \includegraphics[scale=0.6]{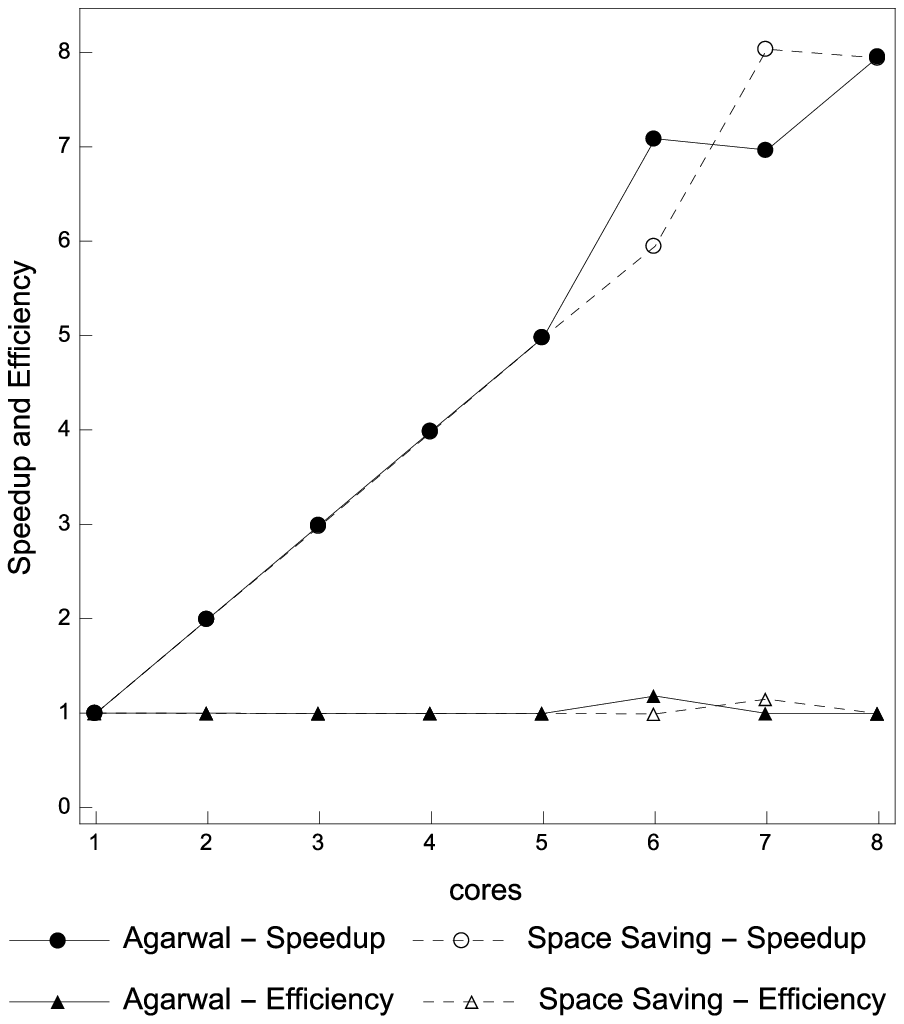}
          \label{h22se}
        }

\end{tabular}

\caption{Experiment 5: Running Time, Speedup and Efficiency} \label{exp5}
\end{figure*}

It is immediate to verify that the performances of our parallel Space Saving algorithm are comparable to the performances of the algorithm by Agarwal et al. for both Zipfian and Hurwitz distributions, with regard to overall running time, speedup and efficiency. In particular, the measured speedup shows in general a linear behavior, with corresponding efficiency close to 1 (or 100\%). It is worth noting here that both algorithms exhibit, in some cases, a slightly superlinear speedup. This phenomenon, observed experimentally, is due to the cluster's memory hierarchy and to related cache effects. So-called superlinear speedups, i.e., speedups which are greater than the number of processors/cores \cite{Bader}, are a source of confusion because in theory this phenomenon is not possible according to  Brent's principle \cite{brent74} (which states that a single processor can simulate a $p$-processor algorithm with a uniform slowdown factor of $p$).

Experimentally, a superlinear  speedup can be observed without violating Brent's principle when the storage space required to run the code on a particular instance exceeds the
memory available on the single-processor machine, but not that of the parallel machine used for the simulation. In such a case, the sequential code needs to swap to secondary
memory (disk) while the parallel code does not, therefore yielding a dramatic slowdown of the sequential code. On a more modest scale, the same problem could occur one level
higher in the memory hierarchy, with the sequential code constantly cache-faulting while the parallel code can keep all of the required data in its cache subsystems. A
sequential algorithm using $M$ bytes of memory will use only $M/p$ bytes on each processor of a $p$ processor parallel system, so that it is easier to keep all of the data in
cache memory on the parallel machine. This is exactly what happened in our simulations.

We recall here that other possible sources of superlinear speedup include some brute--force search problems and the use of a suboptimal sequential algorithm. A parallel system
might exhibit such behavior in search algorithms. In search problems performed by exhaustively looking for the solution, suppose the solution space is divided among the
processors for each one to perform an independent search. In a sequential  implementation the different search spaces are attacked one after the other, while in parallel they
can be done simultaneously, and one processor may find the solution almost immediately, yelding a superlinear speedup.
A parallel system might also exhibit such behavior when using a suboptimal sequential algorithm: each processing element spends less than the time required by the sequential
algorithm divided by  $p$ solving the problem. Generally, if a purely deterministic parallel algorithm were to achieve  better than $p$ times the speedup over the current
sequential algorithm, the parallel algorithm (by Brent's principle) could be emulated on a single processor one parallel part after another, to achieve a faster serial program,
which contradicts the assumption of an optimal serial program.

We recall here that in experiment 5 we used a skew value equal to 3.0, which corresponds to highly skewed distributions of no real practical interest. However, we did these tests anyway for completeness, to test the performances of the algorithms also in this case. Even though both algorithms under test show comparable performances for all of the practical purposes, our algorithm outperforms the one by Agarwal et al. with regard to the error committed, as shown in Section \ref{error}.

\bibliographystyle{elsarticle-num}
\bibliography{bibliography}

% that's all folks
\end{document}